\definecolor{yafcolor1}{rgb}{0.4, 0.165, 0.553}
\definecolor{yafcolor2}{rgb}{0.949, 0.482, 0.216}
\definecolor{yafcolor3}{rgb}{0.47, 0.549, 0.306}
\definecolor{yafcolor4}{rgb}{0.925, 0.165, 0.224}
\definecolor{yafcolor5}{rgb}{0.141, 0.345, 0.643}
\definecolor{yafcolor6}{rgb}{0.965, 0.933, 0.267}
\definecolor{yafcolor7}{rgb}{0.627, 0.118, 0.165}
\definecolor{yafcolor8}{rgb}{0.878, 0.475, 0.686}
\newcommand{\bigO}{\ensuremath{\mathcal{O}}\xspace}
\newcommand{\NP}{\ensuremath{\mathbf{NP}}\xspace}
\algrenewcommand\algorithmicrequire{\textbf{Input:}}
\algrenewcommand\algorithmicensure{\textbf{Output:}}
\newcommand{\ptitle}[1]{\smallskip\noindent{\bf #1.}}
\newcommand{\set}[1]{\left\{#1\right\}}
\newcommand{\pr}[1]{\left(#1\right)}
\newcommand{\fpr}[1]{\mathopen{}\left(#1\right)}
\newcommand{\abs}[1]{{\left|#1\right|}}
\newcommand{\floor}[1]{{\left\lfloor#1\right\rfloor}}
\newcommand{\enset}[2]{\left\{#1 ,\ldots , #2\right\}}
\newcommand{\define}{\leftarrow}
\newcommand{\funcdef}[3]{{#1}:{#2} \to {#3}}
\DeclareRobustCommand{\dispfunc}[2]{%
    \ensuremath{%
        \ifthenelse{\equal{#2}{}}%
            {\mathit{#1}}%
            {\mathit{#1}\fpr{#2}}}}
\newcommand{\diver}[1]{\dispfunc{div}{#1}}
\newcommand{\algpre}[1]{\dispfunc{\textsc{Preprocess}}{#1}}
\newcommand{\algdec}[1]{\dispfunc{\textsc{Decompose}}{#1}}
\newcommand{\algflow}[1]{\dispfunc{\textsc{Assign}}{#1}}
\newcommand{\algmmd}[1]{\dispfunc{\textsc{Breach}}{#1}}
\newcommand{\labflow}{\textsc{f-flow}}
\newcommand{\labilp}{\textsc{core}}
\newcommand{\labgrd}{\textsc{g-flow}}
\newcommand{\labmmd}{\textsc{Breach}}
\newcommand{\fairflow}[1]{\dispfunc{\textsc{FairFlow}}{#1}}
\newcommand{\greedyflow}[1]{\dispfunc{\textsc{GreedyFlow}}{#1}}
\newcommand{\silp}[1]{\dispfunc{\textsc{Core}}{#1}}
\newcommand{\FMMD}{\textsf{FMMD}\xspace}
\newcommand{\OPT}{\text{\normalfont OPT}\xspace}
\newcommand{\GMM}{\textsf{GMM}\xspace}
\newtheorem{theorem}{Theorem}
\newtheorem{lemma}{Lemma}
\newtheorem{prop}{Proposition}
\newtheorem{prob}{Problem}
\newlength{\figwidth}
\newlength{\figheight}
\definecolor{yafaxiscolor}{rgb}{0.3, 0.3, 0.3}
\newlength{\yafaxispad}
\newlength{\yaftlpad}
\newlength{\yaflabelpad}
\newlength{\yafaxiswidth}
\newlength{\yafticklen}
\def\pgfplots@drawtickgridlines@INSTALLCLIP@onorientedsurf#1{}
\newcommand{\yafdrawaxis}[4]{
	\pgfplotstransformcoordinatex{#1}\let\xmincoord=\pgfmathresult 
	\pgfplotstransformcoordinatex{#2}\let\xmaxcoord=\pgfmathresult 
	\pgfplotstransformcoordinatey{#3}\let\ymincoord=\pgfmathresult 
	\pgfplotstransformcoordinatey{#4}\let\ymaxcoord=\pgfmathresult 
	\pgfsetlinewidth{\yafaxiswidth} 
	\pgfsetcolor{yafaxiscolor}
	\pgfpathmoveto{\pgfpointadd{\pgfpointadd{\pgfplotspointrelaxisxy{0}{0}}{\pgfqpointxy{\xmincoord}{0}}}{\pgfqpoint{-0.5\yafaxiswidth}{\yafaxispad}}}
	\pgfpathlineto{\pgfpointadd{\pgfpointadd{\pgfplotspointrelaxisxy{0}{0}}{\pgfqpointxy{\xmaxcoord}{0}}}{\pgfqpoint{0.5\yafaxiswidth}{\yafaxispad}}}
	\pgfpathmoveto{\pgfpointadd{\pgfpointadd{\pgfplotspointrelaxisxy{0}{0}}{\pgfqpointxy{0}{\ymincoord}}}{\pgfqpoint{\yafaxispad}{-0.5\yafaxiswidth}}}
	\pgfpathlineto{\pgfpointadd{\pgfpointadd{\pgfplotspointrelaxisxy{0}{0}}{\pgfqpointxy{0}{\ymaxcoord}}}{\pgfqpoint{\yafaxispad}{0.5\yafaxiswidth}}}
	\pgfusepath{stroke}
}
\pgfplotsset{axis y line=left, axis x line=bottom,
	tick align=outside,
	tickwidth=\yafticklen,
	clip = false,
    x axis line style= {-, line width = 0pt, color=black!0},
    y axis line style= {-, line width = 0pt, color=black!0},
    x tick style= {line width = \yafaxiswidth, color=yafaxiscolor, yshift = \yafaxispad},
    y tick style= {line width = \yafaxiswidth, color=yafaxiscolor, xshift = \yafaxispad},
    x tick label style = {font=\small, yshift = \yaftlpad, inner xsep = 0pt},
    y tick label style = {font=\small, xshift = \yaftlpad},
    every axis y label/.style = {at = {(ticklabel cs:0.5)}, rotate=90, anchor=center, font=\small, yshift = -\yaflabelpad, inner sep = 0pt},
    every axis x label/.style = {at = {(ticklabel cs:0.5)}, anchor=center, font=\small, yshift = \yaflabelpad},
    x tick label style = {font=\small, yshift = 1pt},
    title style={inner sep = 0pt, yshift = -4pt},
    grid = major,
    major grid style  = {dash pattern = on 1pt off 3 pt},
        every axis plot post/.append style= {line width=\yafaxiswidth} ,
	legend cell align = left,
	legend style = {inner sep = 1pt, cells = {font=\scriptsize}},
	legend image code/.code={%
		\draw[mark repeat=2,mark phase=2,#1] 
		plot coordinates { (0cm,0cm) (0.15cm,0cm) (0.3cm,0cm) };%
	} 
}
\begin{document}

\title{Fair Diversity Maximization with Few Representatives}

\author{Florian Adriaens}
\email{florian.adriaens@helsinki.fi}
\affiliation{%
  \institution{University of Helsinki, HIIT}
  \city{Helsinki}
  \country{Finland}
}

\author{Nikolaj Tatti}
\email{nikolaj.tatti@helsinki.fi}
\affiliation{%
  \institution{University of Helsinki, HIIT}
  \city{Helsinki}
  \country{Finland}
}

\renewcommand{\shortauthors}{Florian Adriaens, Nikolaj Tatti}

\begin{abstract}
Diversity maximization problem is a well-studied problem where the goal is to find $k$ diverse items.
Fair diversity maximization aims to select a diverse subset of $k$ items from a large dataset, while requiring that each group of items be well represented in the output.
More formally, given a set of items with labels,
our goal is to find $k$ items that maximize the minimum pairwise distance in the set, while maintaining that each label is represented within some budget.
In many cases, one is only interested in selecting a handful (say a constant) number of items from each group.
In such scenario we show that a randomized algorithm based on padded decompositions improves the state-of-the-art approximation ratio to $\sqrt{\log(m)}/(3m)$, where $m$ is the number of labels.
The algorithms work in several stages:
($i$) a preprocessing pruning which ensures that points with the same label are far away from each other,
($ii$) a decomposition phase, where points are randomly placed in clusters such that there is a feasible solution with maximum one point per cluster and that any feasible solution will be diverse,
$(iii)$ assignment phase, where clusters are assigned to labels, and a representative point with the corresponding label is selected from each cluster.
We experimentally verify the effectiveness of our algorithm on large datasets.
\end{abstract}


\ccsdesc[100]{Theory of computation~Design and analysis of algorithms}
\keywords{Fairness, Max-Min Diversification, Padded Decompositions, Randomized Algorithms}
\maketitle

\newcommand\kddavailabilityurl{https://doi.org/10.5281/zenodo.15533799}

\ifdefempty{\kddavailabilityurl}{}{
\begingroup\small\noindent\raggedright\textbf{KDD Availability Link:}\\
The source code of this paper has been made publicly available at \url{\kddavailabilityurl}.
\endgroup
}


\section{Introduction}
\label{sec:intro}
Diversity maximization is an extremely well-studied task with applications in data summarization \cite{celis2018fair,CHANDRA2001438,zheng2017survey}, feature selection \cite{zadeh2017scalable}, facility location \cite{kuby1987programming,erkut1990discrete,tamir1991obnoxious,ravi1994heuristic}, recommendation systems \cite{abbassi2013diversity, kaminskas2016diversity, castells2021novelty}, web search \cite{xin2006extracting, radlinski2006improving,divtopk,bhattacharya2011consideration}, information exposure in social networks \cite{matakos2020tell} and e-commerce \cite{zheng2017survey}, among others.
The goal is to find a set of \emph{diverse} items within the data, which means that the selected items should be highly dissimilar to each other.

\begin{figure}
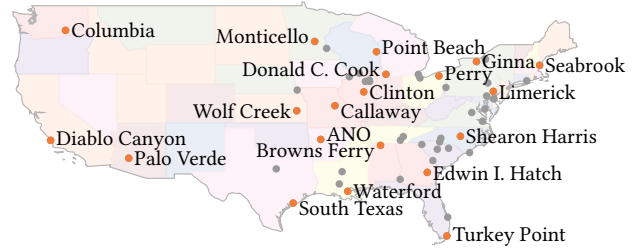

\begin{tikzpicture}[scale=0.13]
\input{states}
\input{coast}
\tikzstyle{marker} = [fill=black!40, circle, inner sep=1pt]

\draw (-93.23138889, 35.31027778) node[marker] {};
\draw (-80.43055556, 40.62333333) node[marker] {};
\draw (-88.22916667, 41.24361111) node[marker] {};
\draw (-87.11861111, 34.70388889) node[marker] {};
\draw (-78.01027778, 33.95833333) node[marker] {};
\draw (-89.28194444, 42.07416667) node[marker] {};
\draw (-91.78, 38.76166667) node[marker] {};
\draw (-76.44222222, 38.43194444) node[marker] {};
\draw (-81.07, 35.05166667) node[marker] {};
\draw (-88.835, 40.17222222) node[marker] {};
\draw (-119.3338889, 46.47111111) node[marker] {};
\draw (-97.785, 32.29833333) node[marker] {};
\draw (-95.64138889, 40.36194444) node[marker] {};
\draw (-83.08638889, 41.59666667) node[marker] {};
\draw (-120.8561111, 35.21083333) node[marker] {};
\draw (-86.56583333, 41.97527778) node[marker] {};
\draw (-88.26805556, 41.38972222) node[marker] {};
\draw (-82.34388889, 31.93416667) node[marker] {};
\draw (-83.2575, 41.96277778) node[marker] {};
\draw (-77.31, 43.27777778) node[marker] {};
\draw (-91.04833333, 32.00666667) node[marker] {};
\draw (-80.15833333, 34.40277778) node[marker] {};
\draw (-75.53805556, 39.46777778) node[marker] {};
\draw (-76.38583333, 43.51777778) node[marker] {};
\draw (-85.11166667, 31.22305556) node[marker] {};
\draw (-88.66916667, 41.24555556) node[marker] {};
\draw (-75.58722222, 40.22666667) node[marker] {};
\draw (-80.94833333, 35.4325) node[marker] {};
\draw (-72.16861111, 41.31194444) node[marker] {};
\draw (-93.84916667, 45.33361111) node[marker] {};
\draw (-76.40694444, 43.52083333) node[marker] {};
\draw (-77.78944444, 38.06055556) node[marker] {};
\draw (-82.89805556, 34.79388889) node[marker] {};
\draw (-112.865, 33.38916667) node[marker] {};
\draw (-76.26805556, 39.75833333) node[marker] {};
\draw (-81.14333333, 41.80083333) node[marker] {};
\draw (-87.53666667, 44.28111111) node[marker] {};
\draw (-92.63305556, 44.62166667) node[marker] {};
\draw (-90.31, 41.72638889) node[marker] {};
\draw (-91.33333333, 30.75111111) node[marker] {};
\draw (-80.24638889, 27.34861111) node[marker] {};
\draw (-75.53555556, 39.46277778) node[marker] {};
\draw (-70.85083333, 42.89888889) node[marker] {};
\draw (-85.09166667, 35.22638889) node[marker] {};
\draw (-78.95083333, 35.63333333) node[marker] {};
\draw (-96.04888889, 28.79555556) node[marker] {};
\draw (-76.69777778, 37.16555556) node[marker] {};
\draw (-76.14888889, 41.08888889) node[marker] {};
\draw (-80.33055556, 25.43416667) node[marker] {};
\draw (-81.31472222, 34.29861111) node[marker] {};
\draw (-81.76583333, 33.14305556) node[marker] {};
\draw (-90.47111111, 29.99527778) node[marker] {};
\draw (-84.78944444, 35.60277778) node[marker] {};
\draw (-95.68888889, 38.23888889) node[marker] {};

\tikzstyle{mark2} = [fill=yafcolor2, circle, inner sep=1pt]
\tikzstyle{label} = [anchor=west, font=\small, inner sep=2pt, inner ysep=-1pt]

\tikzstyle{mark2} = [fill=yafcolor2, circle, inner sep=1pt]
\tikzstyle{label} = [anchor=west, font=\small, inner sep=2pt, inner ysep=-1pt]

\draw (-120.8561111, 35.21083333) node[mark2] {} node[label] {Diablo Canyon};
\draw (-70.85083333, 42.89888889) node[mark2] {} node[label] {Seabrook};
\draw (-96.04888889, 28.79555556) node[mark2] {} node[label, anchor=north west] {South Texas};
\draw (-87.53666667, 44.28111111) node[mark2] {} node[label, anchor = south west] {Point Beach};
\draw (-80.33055556, 25.43416667) node[mark2] {} node[label] {Turkey Point};
\draw (-119.3338889, 46.47111111) node[mark2] {} node[label] {Columbia};
\draw (-78.95083333, 35.63333333) node[mark2] {} node[label] {Shearon Harris};
\draw (-95.68888889, 38.23888889) node[mark2] {} node[label, anchor=east] {Wolf Creek};
\draw (-87.11861111, 34.70388889) node[mark2] {} node[label, anchor=north east] {Browns Ferry};
\draw (-112.865, 33.38916667) node[mark2] {} node[label] {Palo Verde};
\draw (-81.14333333, 41.80083333) node[mark2] {} node[label] {Perry};
\draw (-93.84916667, 45.33361111) node[mark2] {} node[label, anchor=south east] {Monticello};
\draw (-90.47111111, 29.99527778) node[mark2] {} node[label] {Waterford};
\draw (-75.58722222, 40.22666667) node[mark2] {} node[label] {Limerick};
\draw (-82.34388889, 31.93416667) node[mark2] {} node[label] {Edwin I. Hatch};
\draw (-88.835, 40.17222222) node[mark2] {} node[label] {Clinton};
\draw (-93.23138889, 35.31027778) node[mark2] {} node[label, anchor=south west] {ANO};
\draw (-77.31, 43.27777778) node[mark2] {} node[label] {Ginna};
\draw (-91.78, 38.76166667) node[mark2] {} node[label, anchor=north west] {Callaway};
\draw (-86.56583333, 41.97527778) node[mark2] {} node[label, anchor=south east   ] {Donald C. Cook};

\end{tikzpicture}

\caption{Selected $k = 20$ of existing nuclear power plants in US, at most one per state.\label{fig:nucplants}}

\end{figure}

In metric spaces this naturally translates to seeking a set with large distances between its elements.
Of particular interest is the widely studied \emph{max-min diversification} (MMD) objective function (also known as \emph{remote-edge} or \emph{$p$-dispersion}), which asks for a set $S$ of size $k$ that maximizes the smallest pairwise distance between distinct items in $S$ \cite{erkut1990discrete,wang1988study}.
This problem is \NP-complete, but admits a greedy $1/2$ approximation in metric spaces \cite{tamir1991obnoxious,ravi1994heuristic}.


Recent work has introduced \emph{group fairness} constraints to the MMD problem \cite{Moumoulidou2020DiverseDS, Addanki0MM22, wang2022streaming, wang2023fair, Yanhao1, kurkure2024faster}. 
One important motivation is that in some data subset selection tasks, directly combining numerical and categorical attributes into a single objective function can be troublesome (although methods exist), as it might require
ad hoc decisions for discretizing numerical attributes or other pre-processing \cite{Moumoulidou2020DiverseDS}.
Instead, formulating fairness as a constraint allows to achieve a specified level of representation of certain groups that are defined through categorical attributes, while diversity aims at achieving a high distance between numerical feature vectors.

The assumption is that the universe of items $V = V_1 \cup \ldots \cup V_m$ is partitioned into $m$ disjoint groups (also called \emph{colors} or \emph{labels}).
A \emph{fair} solution is a set $S \subseteq V$ of fixed size $k$, while requiring that $S$ contains an appropriate amount of items from each color.

Specifically, \citet{Yanhao1} introduced the \emph{fair max-min diversification} (\FMMD) problem.  This problem models fairness by requiring $ \ell_i \leq |S \cap V_i| \leq u_i$ for each $i \in [m]$, where $ \ell_i$ and $u_i$ are specified parameters.
These constraints capture and generalize various existing notions of group fairness such as proportional representation \cite{celis2018fair,el2020fairness} and equal representation \cite{kleindessner2019fair}.
Aside from diversity maximization, these constraints have been used in various other settings like ranking \cite{celis2017ranking}, clustering \cite{kleindessner2019fair}, multiwinner voting \cite{celis2018fair} and submodular maximization \cite{el2020fairness}.

The \FMMD problem is formally described as:

\begin{prob}[Fair Max-Min Diversification (\FMMD) \cite{Yanhao1}]
\label{prob:fmmd}
Assume a metric space $(V,d)$ with distance function $d: V \times V \to \mathbb{R}_{\geq 0}$, a partition $V = V_1 \cup \ldots \cup V_m$ into $m$ colors, and parameters $k$, $ \ell_i$ and $u_i$ for each $i \in [m]$. Select a set $S \subseteq V$ of fixed size $|S| = k$, such that for all $i \in [m]:  \ell_i \leq |S \cap V_i| \leq u_i$, while maximizing
\begin{align}
\label{def:divscore}
\diver{S} = \min \limits_{\substack{u,v \in S \\ u \neq v}} d(u,v).
\end{align} 
\end{prob}

Here, $\diver{S}$ denotes the \emph{diversity score} of the set $S$.

\ptitle{FMMD with Few Representatives}
This paper studies \FMMD in case one is interested in only a \emph{few representatives} per color.

There are plenty of applications where this assumption is well-motivated. Indeed, consider the case where one requires \emph{at most one representative} from each color. Such fairness has been considered in prior work \cite{abbassi2013diversity,haxell2011forming,LOH2007904}. Specifically, the seminal paper of \citet{abbassi2013diversity} considers document presentation in the context of product search or news aggregator websites. They formulated a problem where no more than one document from the
same category (e.g., from the same news channel or product brand) may appear in the output.
Note that also the classic MMD problem (without fairness) is a special case of \FMMD with $k \leq m$, by assigning each point $v \in V$ with a unique color.

A second example is following facility location problem.
Suppose the U.S. government has decided to reduce its current nuclear power plants to $k$ plants.
They have a set of current locations from which they wish to select $k$ suitable options.
Each state has agreed to the placement of \emph{at most} one such plant within their borders. 
For safety reasons it is desirable that the pairwise distance between any two chosen locations is as large as possible. Where should they place the $k$ nuclear power plants? Figure~\ref{fig:nucplants} shows the output of our algorithm on the current 54 plants in 28 states to $k = 20$.

A final application is the formation of committees \cite{haxell2011forming}. For example, a university faculty wishes to form a committee, consisting of one member from each department in the faculty.
While this may seem like an easy task, there might be faculty members who are in severe conflict with each other.
Preferably they should not be on the committee together, in order to improve the decision making efficiency of the board. How to form a committee where no two members are in conflict?

\ptitle{Results} Our main contribution is a novel algorithm for \FMMD called $\algmmd{}$.
We show the following guarantee when $k \leq m$:

\begin{theorem}
\label{thm:mainres}
If $k \leq m$, then Algorithm~\ref{algo:final} approximates \FMMD within a factor of $\sqrt{\log(m)}/(3m)$ in polynomial time (see Section~\ref{sec:pet}).
\end{theorem}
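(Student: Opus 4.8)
The plan is to fix a target value $\tau$ equal to the optimal diversity $\diver{\OPT}$; since $\diver{\OPT}$ is one of the $O(n^2)$ pairwise distances, I would run the whole pipeline once for each candidate value of $\tau$ and return the best feasible output, so it suffices to analyze the single iteration in which $\tau = \diver{\OPT}$. Throughout, let $o_1,\dots,o_k$ denote a fixed optimal solution, so these points are pairwise at distance at least $\tau$ and meet every color quota $\ell_i \le |\OPT \cap V_i| \le u_i$.

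First I would analyze the preprocessing routine \textsc{Preprocess}. Its goal is to delete points so that any two surviving points of the same color lie at distance $\Omega(\tau)$, while guaranteeing that a feasible solution of diversity $\Omega(\tau)$ still survives. The key observation is that, because $k \le m$, a feasible solution needs only few representatives per color, so greedily thinning each color class (keeping a maximal $\alpha\tau$-separated subset) can only discard points that lie within $\alpha\tau$ of a retained surrogate. I would argue that replacing each deleted optimal point by its surrogate perturbs pairwise distances by at most $\alpha\tau$, and hence preserves both the color quotas and a diversity of at least $(1-O(\alpha))\tau$.

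The heart of the argument is the decomposition phase \textsc{Decompose}. I would invoke a padded decomposition of the metric at scale $\Delta = \Theta(\tau)$: a random partition into clusters of diameter at most $\Delta$ such that, for a padding radius $\rho$, every point $x$ satisfies $\Pr[B(x,\rho)\subseteq C(x)] \ge 1 - O(\beta\rho/\Delta)$, where $C(x)$ is the cluster of $x$ and $\beta$ is the padding parameter of the instance. The diversity guarantee is then the crucial structural fact and is immediate: if two retained points $x,y$ are each $\rho$-padded and lie in distinct clusters, then $y\notin B(x,\rho)$, so $d(x,y)\ge\rho$; hence selecting at most one padded point per cluster automatically produces a set of diversity at least $\rho$. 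Taking $\Delta<\tau$ additionally forces the $k$ optimal points into $k$ distinct clusters. The assignment phase \textsc{Assign} then only has to pick one padded representative per cluster, of the right color, so as to satisfy the quotas and reach size exactly $k$; I would model this as a feasibility flow (the \textsc{FairFlow}/\textsc{Core} routine) on the bipartite clusters-to-colors incidence structure and show that an integral flow of value $k$ exists whenever sufficiently many optimal points are simultaneously padded.

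This brings me to the main obstacle, which is purely probabilistic: balancing the padding radius $\rho$, which lower-bounds the achieved diversity and thus the approximation ratio $\rho/\tau$, against the event that a feasible assignment of size exactly $k$ survives. A naive union bound requiring all $k\le m$ optimal points to be simultaneously $\rho$-padded forces $O(\beta\rho/\Delta)\cdot k<1$, i.e.\ $\rho = O(\Delta/(\beta m))$, which already yields a ratio of order $1/m$. To gain the extra $\sqrt{\log m}$ factor I would \emph{not} insist that every optimal point be padded; instead I would control the number of unpadded optimal colors by a concentration (second-moment) argument and use the flow to refill those colors from other padded points whose existence is guaranteed by the $\Omega(\tau)$-separation established in preprocessing. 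Optimizing the scale $\Delta$ and the radius $\rho$ against the padding parameter $\beta$ is precisely what converts the crude $1/m$ into the stated $\rho/\tau \ge \sqrt{\log(m)}/(3m)$. Finally I would observe that the random decomposition is either derandomized or repeated $\mathrm{poly}(n)$ times to boost the success probability, and that each phase runs in polynomial time, which together establish the claimed polynomial-time $\sqrt{\log(m)}/(3m)$ approximation.
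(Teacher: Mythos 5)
Your pipeline (prune for same-color separation, random padded decomposition, flow-based assignment, enumerating candidate values of $\tau$) matches the paper's, but the step that actually delivers the $\sqrt{\log m}$ factor is wrong. You propose to let some optimal points go unpadded, bound their number by a second-moment argument, and ``refill'' the missing colors through the flow using other padded points of the same color. This fails on instances where the feasible solution is essentially unique: take $k=m$, $\ell_i=u_i=1$, and each color class consisting of a single point. There is nothing to refill from, so a single round succeeds only if \emph{every} optimal point is padded; and at the scale needed for the ratio $\sqrt{\log m}/(3m)$, the per-point failure probability is of order $\sqrt{\log m}\log\log m/m$, so the expected number of unpadded optimal points grows like $\sqrt{\log m}\to\infty$. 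Hence the single-round success probability is $o(1)$ on such instances --- no concentration argument can make it constant --- and your plan, which aims at a constant-probability single shot, cannot establish the theorem.

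The paper's resolution is different and essential. It insists that \emph{all} $k\le m$ optimal points survive, observes that the survival events are \emph{independent} because the balls $B_G(o,\Delta_2)$ around distinct optimal points are pairwise disjoint, and thereby lower-bounds the single-round success probability by $(1-\omega)^{k}\ge(1-\omega)^{m}\ge 1/m$ with $\omega=4\alpha\log(4m\alpha)$ and $\alpha=\sqrt{\log m}/m$ (Proposition~\ref{prop:decompose}). This polynomially small probability is then amplified by running the decomposition $T\cdot m$ times inside Algorithm~\ref{algo:final}; the repetition is not an optional boost, as your closing remark suggests, but the load-bearing device. Two further points in your sketch compound the problem: (i) the per-point guard probability bound with a $\log m$ (rather than $\log n$) inside requires that any ball of graph radius $\Delta_2$ contain at most one point \emph{per color}, hence at most $m$ points; this needs the preprocessing to enforce same-color separation at the full scale $\gamma=\tau/3$, not at scale $\alpha\tau$ as in your ``maximal $\alpha\tau$-separated subset'' --- with only $\alpha\tau$ separation your padding parameter $\beta$ degrades to depend on $n$, and you never pin $\beta$ down; (ii) the disjointness of those balls, which is what gives independence, also follows from that same full-scale separation together with the small cluster diameter. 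Both the independence argument and the $\log m$ ball-counting bound hinge on the stronger pruning, and both are absent from your proposal.
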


We should point out that even though we focus on the case $k \leq m$, our algorithm can be immediately extended to the $k > m$ regime by adding $k - m$ artificial colors with $\ell_i = u_i = 0$. This comes at a cost of reducing the guarantee to $\sqrt{\log(\max(k, m))}/(3\max(k, m))$.

Table~\ref{table:summaryresults} shows a summary of the known approximation ratios for the \FMMD problem.
The first three algorithms listed have exponential running time. Among polynomial time algorithms, our algorithm improves upon the state-of-the-art $\frac{1}{m+1}$ approximation ratio from \citet{Addanki0MM22} by an additional $\bigO(\sqrt{\log m})$ factor.

We should note that the existing polynomial-time algorithms~\citep{Moumoulidou2020DiverseDS,Addanki0MM22} solve a variant of \FMMD with \emph{exact size group fairness constraints}.
These require exactly $k_i$ items from each color $i$, and thus a special case of Problem~\ref{prob:fmmd} with $ \ell_i = u_i$ for each color $i$, whereas we propose a polynomial-time algorithm that works with the general case.

Both algorithms~\citep{Addanki0MM22,Moumoulidou2020DiverseDS} use a similar approach that first builds well-separated clusters of nearby points, followed by an assignment phase to find at most one point from each cluster such that the fairness constraints are satisfied. However, their assignment phase only works for exact size group constraints.

It turns out that those algorithms can also be used for \FMMD (with same approximation guarantees), by replacing the assignment phase with a more intricate flow procedure described in Section~\ref{sec:assign}, which handles the \FMMD constraints.

\ptitle{Techniques} 
Our algorithm $\algmmd{}$ uses a similar high-level approach as \cite{Addanki0MM22} and \cite{Moumoulidou2020DiverseDS}. The idea is to construct well-separated clusters that are far apart, from which we select at most one point per cluster.
\citet{Moumoulidou2020DiverseDS} construct a graph where each connected component corresponds to such a cluster. Our approach also constructs a graph, but we further decompose each connected component by using Calinescu-Karloff-Rabani (CKR) random partitions \cite{calinescu2005approximation}. From the obtained decomposition, we remove all \emph{guard} vertices (certain boundary points) which ensures the final clusters are disconnected and thus diverse. We bound the probability that a fair solution is present in the obtained decomposition.
Our flow procedure from Section~\ref{sec:assign} will find such a feasible solution, if it exists. By repeating the algorithm enough times, we find a solution with good probability.


\begin{table}[t]
\setlength{\tabcolsep}{4pt}
\caption{Summary of the known approximation ratios for \FMMD.
A ratio of $\alpha \leq 1$ indicates the approximate solution has value at least $\alpha \times \OPT$.
Here $\rho$ is the ratio of the largest and the smallest distance, $m' = \max(m, k)$, and $T$ is the number of repetitions, affecting the success probability.
The asterisk indicates the results that only hold for the regime $k \leq m$.}
\label{table:summaryresults}
\centering
\begin{tabular}{lll}
\toprule
Paper & Approx. ratio & Run time \\
\midrule
\citep{Yanhao1} & $1$  & $\bigO(n^k \log n)$  \\
\citep{Moumoulidou2020DiverseDS} & $1/5$  & $\bigO(kn + k^2(em)^k)$ \\
\citep{Yanhao1} & $(1-\epsilon)/5$  & $\bigO(mkn+m^k \log \frac{1}{\epsilon})$\\
\citep{Moumoulidou2020DiverseDS} & $1/(3m-1)$  & $\bigO(kn + k^2m^2\log (km))$ \\
\citep{Addanki0MM22} & $1/((m+1)(1+\epsilon))$  &  $\bigO(nkm^3 \epsilon^{-1}\log n)$\\
\textbf{ours}* & $\sqrt{\log m}/(3m(1 + \epsilon))$  &  $\bigO(Tm^2n^2 \epsilon^{-1} \log \rho)$\\
\textbf{ours}* & $\sqrt{\log m}/(5m(1 + \epsilon))$  &  $\bigO((kmn + Tm^3k^2) \epsilon^{-1} \log \rho)$\\
\textbf{ours} & $\sqrt{\log m'}/(3m'(1 + \epsilon))$  &  $\bigO(Tm'mn^2 \epsilon^{-1} \log \rho)$\\
\textbf{ours} & $\sqrt{\log m'}/(5m'(1 + \epsilon))$  &  $\bigO((kmn + Tm'm^2k^2) \epsilon^{-1} \log \rho)$\\
\bottomrule
\end{tabular}
\end{table}

\section{Related Work}

\ptitle{MMD and MSD} The most fundamental problem settings for diversity maximization assume the input is given by a metric space $(V,d)$ with nonnegative distance function $d: V \times V \to \mathbb{R}_{\geq 0}$ and an integer $k$.
The goal is to select a set $S \subseteq V$ of fixed size $|S|=k$ which maximizes a certain objective function.

The two most popular objective functions are \emph{max-sum diversification} (MSD) and \emph{max-min diversification} (MMD).
MSD maximizes the sum of the pairwise distances in $S$ (also known as \emph{remote-clique}), while MMD maximizes the minimum pairwise distances in $S$ (also known as \emph{remote-edge} or \emph{$p$-dispersion}).
It has been argued that the MMD objective is more suited for tasks like summarization and feature selection, because MSD has a tendency to select outliers and include highly similar items, whereas MMD covers the data range more uniformly \cite{Yanhao1, Moumoulidou2020DiverseDS}.
In this work, we focus on MMD.

The classic MMD problem asks for a set of $k$ items under the max-min objective, without any fairness constraints. 
Both MMD and MSD are shown to be \NP-complete by \cite{erkut1990discrete} and \cite{wang1988study}.
\citet{tamir1991obnoxious} and \citet{ravi1994heuristic} showed that the greedy ``next-furthest-point-strategy'' yields a $\frac{1}{2}$-approximation for MMD, which is the well-known GMM algorithm.
The GMM algorithm is identical to the Gonzalez-heuristic \cite{gonzalez1985clustering} which gives a 2-approximation for the $k$-center problem. \citet{ravi1994heuristic} showed that for any $\epsilon>0$ MMD is \NP -hard to approximate within a factor of $\frac{1}{2}+\epsilon$, thereby showing the optimality of the Gonzalez-heuristic in terms of worst-case performance.
Additionally, \citet{ravi1994heuristic} showed that MMD without the triangle inequality constraints is \NP -hard to approximate within any multiplicative factor. \citet{amagata2023diversity} consider MMD in the presence of outliers. \citet{kumpulainen2024max} have studied the MMD problem with asymmetric distances between the points, while still retaining the directed triangle inequality. They provided a $\frac{1}{6k}$ approximation for this asymmetric generalization.
MMD has also been studied in the dynamic setting \cite{amagatadyn, drosou2013diverse}.

\ptitle{Fair MSD} Fairness for the MSD objective is well-understood and admits a $1/2-\epsilon$ guarantee via local search algorithms in polynomial time \cite{abbassi2013diversity,borodin2017max}.
This is in contrast with the fair MMD problem formulations, which appear to be more difficult to approximate.
Attaining a ratio larger than $1/2$ for fair MSD is hard under the planted clique hypothesis \cite{nipslinear}.
Most authors provide results for general (partition) matroids constraints \cite{nipslinear,borodin2017max,abbassi2013diversity, Ceccarello}, which includes the fairness case when exactly $k_i$ items are required from each color \cite{Moumoulidou2020DiverseDS, Addanki0MM22, wang2022streaming, wang2023fair}.
\citet{cevallos2017local} extend the setting to distances of negative type and also use local search to design a  $\bigO(1-1/k)$ approximation.

\ptitle{Fair MMD} The first to incorporate fairness constraints into the MMD problem was \citet{Moumoulidou2020DiverseDS}.
They study a variant of \FMMD which considers exact size group fairness constraints (see Section~\ref{sec:intro}).
\citet{Moumoulidou2020DiverseDS} provided a $\frac{1}{4}$ approximation for the case of $m=2$ colors, a $\frac{1}{5}$ approximation for the case where $m=\bigO(1)$ and $\sum_i k_i=o(\log n)$, and a $\frac{1}{3m-1}$ approximation for the general case.
This was subsequently improved to the current state-of-the-art $\frac{1}{m+1}$ approximation guarantee by \citet{Addanki0MM22}.
The authors also showed a $\frac{1}{2}$ approximation using an LP-based randomized rounding method, but which only satisfies the fairness constraints in expectation.
Using concentration bounds---which requires that each $k_i$ is large enough---the same randomized rounding method can be used to obtain a $\frac{1}{6}$ approximation while guaranteeing that $(1-\epsilon)k_i$ points are selected from each color.

Specifically for the case for Euclidean metrics, \citet{Addanki0MM22} proved that the problem can be solved exactly in one dimension.
For constant dimensions, they give a $1+\epsilon$ approximation that runs in time $\bigO(nk)+2^{\bigO(k)})$ where $k = \sum_i k_i$.
They showed how to reduce the running time to a polynomial in $k$ at the expense of only picking $(1-\epsilon)k_i$ points from each color.
The running time was recently improved by \citet{kurkure2024faster} to linear in $n$ and $k$ using the multiplicative weight update on an associated linear program in combination with geometric data structures.
The exact-size \FMMD variant has also been studied in the context of streaming and sliding window models \cite{wang2022streaming,  Addanki0MM22, wang2023fair}.

The (general) \FMMD problem (Problem~\ref{prob:fmmd}) was introduced by \citet{Yanhao1}.
They propose an exact algorithm with runtime $\bigO(n^k \log n)$, by reducing it to the problem of finding a maximum independent set in an undirected graph (subject to fairness constraints), which is solved by an integer linear program. 
Additionally, using coreset constructions they obtain a $\frac{1-\epsilon}{5}$-approximation that runs in exponential time. They empirically show their algorithm scales to large datasets when $m$ and $k$ are not too large.

We should note that
\citet{Moumoulidou2020DiverseDS} also argued that their algorithm extends to (partition) matroid constraints. However, since the \FMMD fairness constraints do not have matroid structure, the algorithms by \citet{Addanki0MM22} and \citet{Moumoulidou2020DiverseDS} are not directly applicable to the \FMMD problem.

Interestingly, the best lower bounds for approximating these fair MMD problem formulations are still identical to bounds for unconstrained MMD. There remains a large gap between the polynomial time approximation ratios from Table~\ref{table:summaryresults} and the $\frac{1}{2}+\epsilon$ lower bound.



\ptitle{Independent Transversals} Theorem~\ref{thm:mainres} gives an approximation guarantee for \FMMD when $k \leq m$.
This assumption includes cases where one requires at most a constant number of items from each color.
Such fairness is closely related to finding an \emph{independent transversal} (IT) in a graph.\!\footnote{Given a graph and a partition $V_1 \cup \ldots \cup V_m$ of its vertices, an IT is an independent set of vertices $v_1,\ldots,v_m$ such that $v_i \in V_i$ for each $i$.}
Finding an IT is known to be \NP -complete \cite{haxell2011forming, alon1988linear,fellows1990transversals}.
The IT problem has many natural applications where one is interested in selecting a unique representative from each group (the \emph{transversal} requirement), while asking that the representatives are in some sense far apart (the \emph{independence} requirement) \cite{haxell2011forming}.
Our paper studies the optimization variant of the IT problem in metric spaces with a max-min objective.

\section{Algorithm}

Our algorithm consists of 3 phases:
$(i)$ Given a universe $V$ we select a subset $U$ of $V$ such that $U$ still has an approximate solution and any two points in $U$ of the same color are far away from each other.
$(ii)$ We randomly partition $U$ in clusters such that the points of a feasible solution, with a certain probability, will be in their own clusters and the points from different clusters are far away.
$(iii)$ We use maximum flow to assign clusters to colors, and extract points of those colors from the clusters. It follows that, if the second step was successful, the resulting set will satisfy the constraints.

We present two variants of our algorithm: a slower version with a better guarantee and a faster version with a worse guarantee. The difference between the two algorithms is the preprocessing step.
We assume that we know the optimal score $\OPT$.
In Section~\ref{sec:pet} we use grid search to find an (approximate) guess for $\OPT$.

\subsection{Preprocessing step by pruning
\label{sec:prep}}

We preprocess $(V,d)$ by a simple clustering step described in Algorithm~\ref{algo:preprocessing}. The algorithm has two parameters, a distance threshold $\gamma$ and a budget $b$.

$\algpre{V, d, \gamma, b}$ iterates over the colors. For each color $i$, the algorithm iteratively selects points and removes all points that are within distance of $\gamma$ of the selected ones. This is repeated until no points remain or we have selected $b$ points of that color.

\begin{algorithm}[t]
\caption{$\algpre{V, d, \gamma, b}$, finds a subset of $V$ ensuring same color points are far apart while still having an approximate solution}\label{algo:preprocessing} 
\begin{algorithmic}[1]
\Require Space $(V_1 \cup \ldots \cup V_m, d)$, parameter $\gamma>0$, and a budget $b$.
\For{each color $i$}
 \State Label all $u \in V_i$ as unmarked.
 \State Set $U_i \leftarrow \emptyset$.
\While {there is unmarked point in $V_i$ and $\abs{U_i} \leq b$}
	\State Pick any such unmarked $v$.
	\State Mark all vertices in $\{u \in V_i: d(u,v) < \gamma \}.$ 
	\State $U_i \leftarrow U_i + v$.
\EndWhile
\Ensure $U = U_1 \cup \cdots \cup U_m$.
\EndFor

\end{algorithmic}

\end{algorithm}

An immediate consequence of the algorithm is that the points of the same color in $U$ will be far away from each other.

\begin{lemma}
Assume $(V, d)$, a parameter $\gamma$, and a budget $b$.
Let $U = \algpre{V, d, \gamma, b}$. Then any points $u, v \in U$ with the same color have $d(u, v) \geq \gamma$.  
\end{lemma}

This reduction comes at a cost of the optimal solution. More specifically, if we do not set budget $b$ and set $\gamma = \OPT / 3$, then the optimum value in the resulting set may reduce by a factor of three.

\begin{proposition}
\label{prop:pre1}
Assume $(V, d)$ with $m$ colors and $n$ points, $V = V_1 \cup \cdots \cup V_m$. Assume $\tau > 0$ such that there is
$O \subseteq V$ satisfying the constraints with $\diver{O} \geq \tau$. Let $U = \algpre{V, d, \tau / 3, n}$. 
Then there is $S \subseteq U$ satisfying the constraints with $\diver{S} \geq \tau / 3$.
\end{proposition}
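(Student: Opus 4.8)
The plan is to start from the feasible witness set $O$ guaranteed by the hypothesis and ``snap'' each of its points to a nearby surviving point of the same color, then argue that the snapped set $S$ inherits both feasibility and a diversity bound of $\tau/3$. The first ingredient I would establish is a \emph{covering} property of $\algpre{}$ under the budget $b = n$: since $\abs{U_i}$ can never exceed $\abs{V_i} \le n = b$, the inner while-loop is never cut off by the budget and therefore terminates only when every point of $V_i$ has been marked. Consequently every $v \in V_i$ lies within distance $\gamma = \tau/3$ of some selected point of $U_i$ (taking $v$ itself when $v$ was selected). I would record this as: for each $v \in V_i$ there is a same-color representative $r(v) \in U_i$ with $d(v, r(v)) < \tau/3$.

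Next I would define $S = \set{r(o) : o \in O}$. The heart of the argument is a single triangle-inequality estimate. For distinct $o, o' \in O$ we have $d(o,o') \ge \diver{O} \ge \tau$, and since $d(o, r(o)) < \tau/3$ and $d(o', r(o')) < \tau/3$,
\[
d(r(o), r(o')) \ge d(o, o') - d(o, r(o)) - d(o', r(o')) > \tau - \tfrac{\tau}{3} - \tfrac{\tau}{3} = \tfrac{\tau}{3}.
\]
This estimate does double duty: applied to $o \ne o'$ it forces $r(o) \ne r(o')$ (otherwise the left-hand side would be $0$), so $r$ is injective on $O$; and it simultaneously yields $\diver{S} \ge \tau/3$.

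With injectivity in hand the feasibility check is immediate. Because $r$ maps color-$i$ points of $O$ to color-$i$ points of $U$ and is injective, $\abs{S \cap V_i} = \abs{O \cap V_i}$ for every $i \in [m]$, so $S$ inherits the bounds $\ell_i \le \abs{O \cap V_i} \le u_i$, and in particular $\abs{S} = \abs{O} = k$. Since each $r(o) \in U$, we also have $S \subseteq U$, completing the claim.

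The step I expect to be the crux is the collision (injectivity) observation, together with seeing why $\gamma = \tau/3$ is exactly the right threshold: the displayed inequality retains strictly positive slack precisely when $\gamma < \tau/2$, and the choice $\gamma = \tau/3$ both rules out collisions and buys the $\tau/3$ diversity bound. A larger $\gamma$ would risk two optimal points collapsing onto one representative, which would drop $\abs{S}$ below $k$ and break the per-color equalities; for this reason I would isolate the collision impossibility as a standalone observation before assembling the counts, rather than folding it implicitly into the diversity estimate.
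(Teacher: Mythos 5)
Your proposal is correct and follows essentially the same route as the paper: snap each point of $O$ to a nearby same-color survivor in $U$, use the triangle inequality once to get both injectivity of the snap map and the $\tau/3$ diversity bound, and conclude feasibility from the preserved per-color counts. The only difference is that you make explicit two facts the paper leaves implicit (the covering property forced by the budget $b=n$, and that injectivity preserves the color counts), which is a faithful elaboration rather than a different argument.
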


\begin{proof}
For each $o \in O$, let $f(o)$ be the closest point in $U$ with the same color. Note that $f$ is injective as otherwise we violate the triangle inequality. Consequently, the set $S = \set{f(o) \mid o \in O}$ satisfies the constraints. Moreover, the triangle inequality implies that $\diver{S} \geq \tau / 3$, proving the claim.
\end{proof}

Assuming that we can compute the distance $d(u, v)$ in unit time, the running time of $\algpre{V, d, \tau / 3, n}$ is in $\bigO(n^2)$, which may be too expensive for large $n$.

If this is an issue we can speed up the preprocessing by only considering $k$ points from each color. This comes with additional cost of reducing the approximation by a factor of five.

\begin{proposition}
\label{prop:pre2}
Assume $(V, d)$ with $m$ colors, $V = V_1 \cup \cdots \cup V_m$. Assume $\tau > 0$ such that there is $O \subseteq V$ satisfying the constraints with $\diver{O} \geq \tau$. Let $U = \algpre{V, d, 2 \tau / 5, k}$. 
Then there is $S \subseteq U$ satisfying the constraints with $\diver{S} \geq \tau / 5$.
\end{proposition}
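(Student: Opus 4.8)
The plan is to mimic the proof of Proposition~\ref{prop:pre1}, but I have to account for two complications introduced by capping the budget at $k$ points per color: (i) the threshold is now $2\tau/5$ rather than $\tau/3$, so the marking radius is larger, and (ii) because I stop after selecting $k$ points, the map sending an optimal point to a nearby surviving point may fail to be well-defined, since the closest surviving same-color point might now be farther than the marking radius. So first I would set $\gamma = 2\tau/5$ and recall from the Lemma that any two same-color points of $U$ are at distance $\geq 2\tau/5$. The key structural observation is that $|O \cap V_i| \leq u_i \leq k$ for each color $i$, because $O$ is feasible and $|O| = k$; indeed $|O| = k$ forces $|O \cap V_i| \leq k$ trivially. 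This will be what lets me argue that the budget of $k$ per color is never the binding constraint for the points I actually need to track.

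Next I would define, for each $o \in O$, a surviving representative $f(o) \in U$ of the same color with $d(o, f(o)) < \gamma = 2\tau/5$. To justify that such a representative exists, note that when $\algpre$ processed color $i$, every point of $V_i$ was either selected into $U_i$ or marked by some selected point within distance $\gamma$ — \emph{unless} the budget cut off the selection early. The hard part will be ruling out the bad case where $o$ was neither selected nor marked because the loop terminated at $b = k$ selections before reaching $o$'s neighborhood. I would argue this cannot happen for the points of $O$ simultaneously: the selected set $U_i$ consists of mutually $\gamma$-separated points, and I want to show that $O \cap V_i$ can be injectively matched into $U_i$ with each match within distance $\gamma$. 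If some $o \in O \cap V_i$ were at distance $\geq \gamma$ from every selected point, then $o$ itself would have been eligible for selection and the loop would not have stopped (it only stops when no unmarked point remains or $k$ points are chosen); since $|O \cap V_i| \le k$, a counting/pigeonhole argument on the separated selected points versus the separated optimal points should close this gap.

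Then I would verify that $f$ is injective, exactly as in Proposition~\ref{prop:pre1}: if $f(o) = f(o') = w$ for distinct $o, o'$ of the same color, then by the triangle inequality $d(o, o') \leq d(o, w) + d(w, o') < 2\gamma = 4\tau/5$, which is consistent and does not immediately contradict $\diver{O} \geq \tau$, so a naive copy of the injectivity argument fails. This is the second delicate point. I would instead use the sharper bound: since $d(o, w) < 2\tau/5$ and $d(o',w) < 2\tau/5$ would give $d(o,o') < 4\tau/5 < \tau$, contradicting $\diver{O} \geq \tau$, injectivity does in fact follow — the factor $2/5$ was chosen precisely so that $2 \cdot (2\tau/5) = 4\tau/5 < \tau$. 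Hence $S = \set{f(o) \mid o \in O}$ has $|S| = |O| = k$ and satisfies the constraints, as $S$ has the same color-counts as $O$.

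Finally I would lower-bound $\diver{S}$. For distinct $f(o), f(o') \in S$, the triangle inequality gives
\begin{align}
d(f(o), f(o')) \geq d(o, o') - d(o, f(o)) - d(o', f(o')) > \tau - \tfrac{2\tau}{5} - \tfrac{2\tau}{5} = \tfrac{\tau}{5},
\end{align}
so $\diver{S} \geq \tau/5$, establishing the claim. The two places I expect to require care are the existence of $f(o)$ under the budget cutoff (the pigeonhole step using $|O \cap V_i| \le k$ and the $\gamma$-separation of $U_i$) and confirming that the constants $2/5$ and $1/5$ are exactly what make both the injectivity bound $4\tau/5 < \tau$ and the final diversity bound $\tau/5$ go through.
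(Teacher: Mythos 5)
There is a genuine gap, and it is exactly at the point you flagged as "the hard part": the existence of the representative $f(o)$ for colors where the budget truncated the selection. Your proposed fix does not work. When $\abs{U_i} = k$ (the loop stopped because the budget was reached, not because every point of $V_i$ was marked), there is simply no guarantee that any point of $O \cap V_i$ has a selected point within distance $2\tau/5$, and no pigeonhole argument can manufacture one. Concretely: let $V_i$ consist of $k$ points clustered in a region far from $O$, plus the $c_i$ optimal points of color $i$ elsewhere; since $\algpre{}$ picks \emph{any} unmarked point at each step, it may select the $k$ far-away points first and stop, leaving every $o \in O \cap V_i$ at distance $\gg \tau$ from all of $U_i$. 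Your parenthetical already concedes that the loop can stop with unmarked points remaining once $k$ points are chosen, so the statement "the loop would not have stopped" gives you nothing, and the counting argument you sketch (matching the $\gamma$-separated points of $O \cap V_i$ to the $\gamma$-separated points of $U_i$) has no mechanism to enforce the distance bound $d(o, f(o)) < \gamma$ that the rest of your proof relies on. The parts of your proof that do work (injectivity via $4\tau/5 < \tau$, and the final bound $\tau - 2\tau/5 - 2\tau/5 = \tau/5$) are conditional on this nonexistent map.

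The paper's proof takes a structurally different route precisely to avoid this. Call color $i$ \emph{critical} if $\abs{U_i} < k$; for critical colors all of $V_i$ was marked, so your proximity map $f$ exists and the paper uses it only there. For a non-critical color $j$ (budget exhausted, $\abs{U_j} \geq k$), the paper abandons proximity to $O$ entirely and uses a packing argument instead: the points of $U_j$ are pairwise $\geq 2\tau/5$ apart, so each point already placed in $S$ can be within $\tau/5$ of at most one point of $U_j$; since at that stage $\abs{S} \leq k - c_j$ (where $c_j = \abs{O \cap V_j}$), deleting all points of $U_j$ within $\tau/5$ of $S$ leaves at least $c_j$ points, and any $c_j$ of them can be added to $S$ while keeping $\diver{S} \geq \tau/5$. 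The substitute points for non-critical colors need not be anywhere near $O$ — their diversity comes from the $2\tau/5$ separation inside $U_j$ and the $\tau/5$ deletion radius, not from the triangle inequality against $O$. This two-case split (proximity for critical colors, packing for non-critical ones) is the missing idea; without it the proposition cannot be proved along the lines you propose.
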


The proof is an adaptation of the proof for Theorem~6 in~\citep{Moumoulidou2020DiverseDS}.

\begin{proof}
We say that color $i$ is critical if $\abs{U_i} < k$.
For $o \in O$ with critical color define $f(o)$ to be the closest point to $o$ in $U$ with the same color. Let 
\[
    S = \set{f(o) \mid o \in O, o \text{ has critical color}}\quad.
\]
The triangle inequality now implies $\diver{S} \geq \tau / 5$.

Let $c_i$ be the number of points of color $i$ in $O$.

Process non-critical colors as follows. Let $j$ be a non-critical color. Delete points from $U_j$ that are within $\tau / 5$ from any point in $S$. Since any two points in $U_j$ are least $2 \tau / 5$ far away, we delete at most $\abs{S} \leq k - c_j$ points. Therefore, there are at least $c_j$ points left in $U_j$. Add $c_j$ points to $S$.
Performing this step for each non-critical color (in any order), will result in $S$ that has the same color counts than $O$, and $\diver{S} \geq \tau / 5$.
\end{proof}

Assuming that we can compute the distance $d(u, v)$ in unit time, the running time of $\algpre{V, d, 2\tau / 5, k}$ is in $\bigO(n km)$. The resulting set contains $\bigO(km)$ points.

\subsection{Random decomposition of the pruned space\label{sec:decomp}}

Our next goal is to decompose $U$ into clusters that are far enough from each other and that a feasible solution $O$ will have at most one point in each cluster.

Assume that we have guessed the score of the optimal solution in $U$, say $\gamma$. Now let $\alpha = \sqrt{\log(m)} / m$, and define a graph $G = (U, E)$, consisting of the points and edges $(u, v)$ for which $d(u, v) < \gamma \alpha$.

Denote $d_G(u, v)$ to be the shortest path distance in $G$ between two points $u, v$. Let
\[
    B_G(u, R) = \set{v \in U \mid d_G(u, v) \leq R}
\]
be the points in $G$ with shortest path distance of at most $R$ from $u$.

Let $\funcdef{\pi}{[n]}{U}$ be a random permutation of $U$. Let us define $\Delta_1 = \max(\floor{1/(4\alpha)}, 1)$ and $\Delta_2 = \floor{1/(2\alpha)}$. It is easy to verify that $\Delta_1 \leq \Delta_2$ for $m \geq 1$. Select $R$ uniformly between $\Delta_1$ and $\Delta_2$.

To define the decomposition we first partition $U$ into clusters $C_j$ for $j \in [n]$. Each $C_j \subseteq B_G(\pi(j), R)$ consists of those points in $B_G(\pi(j), R)$ that have not been included in any $B_G(\pi(i), R)$, where $i < j$. Note that some $C_j$ might be empty.

This step is known as a Calinescu-Karloff-Rabani (CKR) random partition of a metric space \cite{calinescu2005approximation}, with a slight parameter modification to fit our needs.
CKR partitions play a pivotal role in metric embedding theory \cite{FAKCHAROENPHOL2004485,MendelFOCS}.

Our final decomposition is $D_1, \ldots, D_n$, where $D_j$ contains vertices in $C_j$ that have $d_G(\pi(j), u) < R$. The vertices in $C_j \setminus D_j$ are refered to as \emph{guard} vertices and are removed from the graph.

The algorithm is given in Algorithm~\ref{algo:padded}.

The following proposition states that with a certain probability a fair solution can be found in clusters $D_1, \ldots, D_n$.

\begin{algorithm}[t]
\caption{$\algdec{U, d, \gamma}$, padded decomposition of space $U$.}\label{algo:padded} 
\begin{algorithmic}[1]
\Require Space $(U_1 \cup \cdots \cup U_m, d)$ with $n$ points, parameter $\gamma$.
\State $\alpha \define \sqrt{\log m} / m$.
\State $G \define (U,E)$ with $E = \set{ (u, v) \in U \times U \mid d(u, v) < \gamma \alpha}$.
\State $\Delta_1 \define \max(\floor{1/(4\alpha)}, 1)$.
\State $\Delta_2 \define \floor{1/(2\alpha)}$.

\State $\pi \define$ a uniformly random permutation of $U$.
\State $R \define$ a uniformly random value in $\enset{\Delta_1, \Delta_1 + 1}{\Delta_2}$.
\State $Z \define \emptyset$
\For {$j = 1, \ldots, n$}
\State $C_j \define B_G(\pi(j),R) \setminus Z$.
\State $Z \define Z\cup C_j$.
\State $D_j \define \set{u \in C_j \mid d_G(\pi(j), u) < R}$.
\EndFor

\Ensure $\{D_1,\ldots,D_n\}$. 
\end{algorithmic}
\end{algorithm}

\begin{proposition}
\label{prop:decompose}
Assume $(U, d)$ with $m$ colors, $U = U_1 \cup \cdots \cup U_m$. Assume $\gamma > 0$ such that there is $O \subseteq U$ with $\diver{O} \geq \gamma$. Moreover, assume that for any $u, v \in U_i$, it holds that $d(u, v) \geq \gamma$. If $k \leq m$, then with probability of $\Omega(1/m)$, 
$\algdec{U, d, \gamma}$ yields $\enset{D_1}{D_n}$
such that $\abs{O \cap D_i} \leq 1$ for every $i$ and $O \subseteq \bigcup{D_i}$.

\end{proposition}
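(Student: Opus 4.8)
I need to prove that the CKR-based decomposition succeeds with probability $\Omega(1/m)$, meaning every point of the feasible solution $O$ survives (is not a guard vertex, so $O \subseteq \bigcup D_i$) and no two points of $O$ land in the same cluster $D_i$. Let me think about what these two events require.

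First, the separation/singleton property. Two points $u,v \in O$ have $d(u,v) \geq \gamma$, so in $G$ (edges for $d < \gamma\alpha$) they are at graph-distance $d_G(u,v) > 1/\alpha$... wait, let me reconsider. Each edge represents distance $< \gamma\alpha$, so traveling along $t$ edges covers actual distance $< t\gamma\alpha$. To connect $u,v$ with $d(u,v)\geq\gamma$ requires $t\gamma\alpha > \gamma$, i.e. $t > 1/\alpha$. So $d_G(u,v) > 1/\alpha \geq 2\Delta_2$. Since each ball $B_G(\pi(j),R)$ has radius $R \leq \Delta_2$, two points of $O$ can never lie in the same ball, hence never in the same $C_j$, hence never in the same $D_j$. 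This gives the singleton property deterministically.

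**The main obstacle: surviving the guards.**

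The real content is bounding the probability that no point of $O$ becomes a guard vertex. A point $o \in O$ is a guard in cluster $C_j$ exactly when $o$ is first captured by $\pi(j)$'s ball at the boundary, i.e. $d_G(\pi(j),o) = R$ (integer-valued shortest-path distance equal to the chosen radius). So $o$ survives iff, letting $\pi(j)$ be the first center (in permutation order) whose ball contains $o$, we have $d_G(\pi(j),o) < R$ strictly.

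I would analyze a single $o$ by the standard CKR argument. Order all points $w$ by their graph-distance $d_G(w,o)$. Point $o$ is "cut as a guard" when the center that first claims it sits exactly at distance $R$. Conditioning on $R$ and on the random permutation, the probability that $o$ is a guard is governed by how many points lie in the annulus at each integer radius between $\Delta_1$ and $\Delta_2$. The classic bound: $\Pr[o \text{ is guard}] \leq \frac{1}{\Delta_2 - \Delta_1 + 1} \cdot \ln\frac{|B_G(o,\Delta_2)|}{|B_G(o,\Delta_1-1)|}$ or a similar ratio-of-ball-sizes expression, since $R$ is uniform over roughly $\Delta_2-\Delta_1 \approx 1/(4\alpha)$ integer values and the permutation makes the "closest competing center lies at the boundary" event telescope into a logarithm of ball-size ratios.

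**Turning the ball-size ratio into a constant.**

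The key estimate will be that $|B_G(o, \Delta_2)| \leq m$, because within graph-distance $\Delta_2 \leq 1/(2\alpha)$ the actual pairwise distance among same-colored points is $\geq \gamma$ while the ball has diameter $< 2\Delta_2\gamma\alpha \leq \gamma$, so each color contributes at most one point to the ball — giving at most $m$ points total. (This is precisely where the hypotheses $d(u,v)\geq\gamma$ for same color and $k\leq m$ enter.) Thus the log-ratio is at most $\ln m$, and $\Delta_2 - \Delta_1 + 1 = \Theta(1/\alpha) = \Theta(m/\sqrt{\log m})$. Plugging in $\alpha = \sqrt{\log m}/m$ gives
\[
\Pr[o \text{ is guard}] \leq \frac{\ln m}{\Theta(m/\sqrt{\log m})} = \Theta\!\left(\frac{(\log m)^{3/2}}{m}\right),
\]
which is too large for a naive union bound over up to $m$ points of $O$. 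I expect this to be the crux: I must either sharpen the per-point bound (the correct CKR bound gives $\frac{1}{\Delta_2-\Delta_1}\log(\text{ratio})$ but with the ratio being a local growth factor whose logarithms telescope across the whole of $O$), or — more likely — union-bound cleverly so that $\sum_{o\in O}\Pr[o\text{ guard}]$ telescopes. Summing the log-ratios $\ln\frac{|B_G(o,\Delta_2)|}{|B_G(o,\Delta_1)|}$ over disjoint regions does not exceed $\ln(\text{total points}) = \ln(|U|)$, but we want $\ln m$; here the same-color separation caps each relevant ball at $m$ points, so each individual term is at most $\ln m$ and there are at most $k \leq m$ terms. The plan is to show the total guard probability is at most $\sum_{o} \frac{c\ln m}{1/\alpha} = k\alpha c\ln m \leq m \cdot \frac{\sqrt{\log m}}{m}\cdot c\ln m$; this still looks like $(\log m)^{3/2}$, so the genuine fix must be that the correct CKR bound carries $\alpha\sqrt{\log m}$ rather than $\alpha\log m$ — matching the classical guarantee that CKR padding fails with probability $O(\alpha \cdot \Delta \log(\text{growth}))$ only when summed against $\sqrt{\log}$-scaled radii. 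I would therefore set up the radius range $[\Delta_1,\Delta_2]$ of width $\Theta(1/\alpha)$ and the ball-size bound $m$ so that the failure probability per point is $O(\alpha\sqrt{\log m}) = O(\log m / m)$, and union-bounding over $\leq m$ points of $O$ yields total failure probability $O(\log m)$... which is vacuous, so the decisive step is showing the per-point guard probability is actually $O(1/m^{1+\delta})$ or using the fact that $|O|\le k$ with a refined telescoping. Getting this constant right — reconciling the $\sqrt{\log m}/m$ choice of $\alpha$ with a $\Omega(1/m)$ success probability — is where essentially all the difficulty lies, and I would structure the proof around a lemma bounding $\Pr[o \text{ is a guard}]$ by $O(1/m^2)$ via the careful CKR annulus-counting argument combined with $|B_G(o,\Delta_2)|\leq m$, after which a union bound over the $\leq m$ points of $O$ closes the argument.
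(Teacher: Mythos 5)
Your first half is correct and matches the paper: the singleton property $\abs{O \cap D_i} \leq 1$ follows deterministically from the graph-distance argument, and you correctly set up the CKR guard bound $p(o \text{ is a guard}) \leq \frac{1}{\Delta_2 - \Delta_1 + 1}\log\frac{\abs{B_G(o,\Delta_2)}}{\abs{B_G(o,\Delta_1-1)}}$ together with the ball upper bound $\abs{B_G(o,\Delta_2)} \leq m$. But there is a genuine gap in the second half, and the fix you gesture at would not work. You bound the log-ratio by $\log m$ because you never lower-bound the \emph{inner} ball. The paper's argument splits into cases: if $o$ lies in a connected component of $G$ with fewer than $\Delta_1$ vertices, that component has no guards at all, so $o$ cannot be a guard; otherwise $\abs{B_G(o,\Delta_1 - 1)} \geq \Delta_1 \geq 1/(4\alpha)$. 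This turns the ratio into $\frac{m}{1/(4\alpha)} = 4m\alpha = 4\sqrt{\log m}$, so the per-point guard probability is $\omega \leq 4\alpha \log(4m\alpha) = O\bigl(\sqrt{\log m}\,\log\log m / m\bigr)$ --- a $\log\log m$ factor, not the $\log m$ you carry, and far larger than the $O(1/m^2)$ you hope to prove (which is unattainable: the guard probability really is of order $\alpha$ up to logarithmic factors).

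The second missing ingredient is that no union bound of any kind can close the argument, because $\sum_{o \in O} \omega$ can be $\omega(1)$. The paper instead uses \emph{independence}: for distinct $o, o' \in O$ the balls $B_G(o,\Delta_2)$ and $B_G(o',\Delta_2)$ are disjoint (since $d_G(o,o') > 2\Delta_2$), and whether $o$ is a guard depends only on the random choices restricted to its own ball, so the events are independent and
\[
p(O \text{ has no guards}) \;\geq\; (1-\omega)^k \;\geq\; (1-\omega)^m \;\geq\; \exp\pr{-2\omega m} \;=\; m^{-2\omega m/\log m} \;\geq\; 1/m
\]
for $m$ large enough, where the last step uses precisely that $\omega m = O(\sqrt{\log m}\,\log\log m) = o(\log m)$, i.e.\ $2\omega m \leq \log m$. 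This product structure is what converts a per-point failure probability that is individually too large for summing into an $\Omega(1/m)$ joint success probability; your proposal, lacking both the inner-ball lower bound and the independence step, cannot reach the stated guarantee.
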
 

\begin{proof}
Adopt the notation used in Algorithm~\ref{algo:padded}.
Let $u, v \in C_j$. Note by definition, the shortest path from $u$ to $v$ in $G$ is at most $2 \Delta_2$. Due to the triangle inequality, $d(u, v) < \gamma$. This immediately implies that $u$ and $v$ must have different colors, and that $\abs{C_j \cap O} \leq 1$.

Next we bound the probability that $O$ does not have any guards.

First note that since $\alpha \leq 1/2$ for any $m$, we have $\Delta_2 \geq 1$, which guarantees $\Delta_1 \leq \Delta_2$.
Let $o \in O$. A known result, see, for example, Lemma~1.2 by~\citet{har2018two}, states that
\[
    p(o \text{ is a guard}) \leq \frac{1}{\Delta_2 - \Delta_1 + 1} \log \frac{\abs{B_G(o, \Delta_2)}}{\abs{B_G(o, \Delta_1 - 1)}}\quad.
\]

Note that
\[
    \Delta_2 - \Delta_1 + 1 = \floor{1/(2\alpha)} - \max(\floor{1/(4\alpha)}, 1) + 1
    \geq 1 / (4 \alpha).
\]

Let $u, v \in B_G(o, \Delta_2)$. The shortest path from $u$ to $v$ in $G$ is at most $2 \Delta_2$. Due to the triangle inequality, $d(u, v) < \gamma$. Therefore, $u$ and $v$ must have different colors, and consequently, $\abs{B_G(o, \Delta_2)} \leq m$.

If $o$ is in a connected component, say $W$, of $G$ with less than $\Delta_1$ vertices, then $W$ has no guards, and therefore $o$ cannot be a guard. Assume $\abs{W} \geq \Delta_1$. Then $\abs{B_G(o, \Delta_1 - 1)} \geq \Delta_1 \geq 1/ (4 \alpha)$.

In summary, if we
let $\omega = 4 \alpha \log (4m \alpha)$, then
\[
    p(o \text{ is a guard}) \leq \omega\quad.
\]

Note that $o$ being a guard depends solely on the order of nodes in $B_G(o, \Delta_2)$. Let $o, o' \in O$ be two distinct nodes. Then since $B_G(o, \Delta_2)$ and $B_G(o', \Delta_2)$ are disjoint, the event of $o$ being a guard is independent of $o'$ being a guard. Thus
\[
    p(O \text{ not having guards}) = \prod_{o \in O} p(o \text{ is not a guard}) \leq (1 - \omega)^{k}\quad.
\]

There is a constant $M$ such that for any $m \geq M$, it holds that
$\omega < 1$, $1 - \omega < 1/2$, and $\log m \geq 2 \omega m$. For such $m$, we have
\[
\begin{split}
    (1 - \omega)^k & \geq (1 - \omega)^m
    \geq \exp\pr{\frac{-\omega m}{1 - \omega}} \\
    & \geq \exp\pr{-2\omega m} = m^{-2 \omega m / \log m} \geq 1/m, \\
\end{split}
\]
where the first inequality uses the assumption $k \leq m$. 
\end{proof}

Next we show that the clusters $D_1, \ldots, D_n$ are far away from each other.

\begin{proposition}
\label{prop:diverse}
Let $\enset{D_1}{D_n}$ be given by $\algdec{U, d, \gamma}$. Let $S$ be any set for which $\abs{S \cap D_i} \leq 1$ for every $D_i$, then $\diver{S} \geq \gamma \alpha$.
\end{proposition}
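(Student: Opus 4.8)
The plan is to show that any two distinct points of $S$ are at distance at least $\gamma \alpha$, via a contradiction argument that exploits the order in which clusters claim vertices during $\algdec{U, d, \gamma}$. First I would fix two distinct points $u, v \in S$. Since the hypothesis $\abs{S \cap D_i} \leq 1$ forces each cluster to hold at most one point of $S$, the points $u$ and $v$ must lie in two different clusters; say $u \in D_a$ and $v \in D_b$ with $a \neq b$, and assume without loss of generality that $a < b$.

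Next I would suppose, toward a contradiction, that $d(u, v) < \gamma \alpha$. By the definition of the graph $G$ inside $\algdec{U, d, \gamma}$, this places the edge $(u,v)$ in $E$, so $u$ and $v$ are adjacent and the shortest-path distance satisfies $d_G(u, v) = 1$. The key step is to track $v$ against the ball centered at $\pi(a)$. Because $u \in D_a$ is \emph{not} a guard, we have the strict bound $d_G(\pi(a), u) < R$, and the triangle inequality for shortest-path distance yields
\[
    d_G(\pi(a), v) \leq d_G(\pi(a), u) + d_G(u, v) < R + 1.
\]
Here I would invoke that $R$ (chosen in $\enset{\Delta_1}{\Delta_2}$) and all graph distances are integers, so this strict inequality collapses to $d_G(\pi(a), v) \leq R$, i.e. $v \in B_G(\pi(a), R)$.

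The contradiction then follows from the greedy claiming order. Since $v \in B_G(\pi(a), R)$ and $a < b$, at iteration $a$ the point $v$ is either already in $Z$ or gets added to $C_a$ (and hence to $Z$); in both cases $v \in Z$ once iteration $a$ completes. But then $v$ is excluded from every $C_j$ with $j > a$, so in particular $v \notin C_b \supseteq D_b$, contradicting $v \in D_b$. Hence $d(u,v) \geq \gamma \alpha$ for every pair of distinct points of $S$, which is precisely $\diver{S} \geq \gamma \alpha$.

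The main obstacle I anticipate is the integrality and boundary bookkeeping rather than any deep estimate: the argument works only because guards (vertices at shortest-path distance exactly $R$) are stripped out, granting $u$ the \emph{strict} inequality $d_G(\pi(a), u) < R$, and because $R$ is integer-valued the single-edge perturbation $d_G(u,v) = 1$ cannot push $v$ beyond radius $R$. I would therefore state explicitly that shortest-path distances in $G$ are integers and that $R \in \enset{\Delta_1}{\Delta_2}$ is an integer, since the entire collapse from $< R + 1$ to $\leq R$ depends on it.
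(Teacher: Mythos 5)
Your proof is correct and takes essentially the same approach as the paper: the paper's proof observes that any cross-edge $(u,v)$ in $G$ with $u \in C_i$, $v \in C_j$, $i < j$ forces $u$ to be a guard (else $v$ would have been claimed into $C_i$), so the sets $D_i$ are pairwise disconnected in $G$, which is exactly your contradiction argument that a non-guard $u \in D_a$ adjacent to $v$ would place $v \in B_G(\pi(a), R)$ and hence absorb $v$ into $C_a$ before iteration $b$. The integrality bookkeeping you make explicit (collapsing $d_G(\pi(a), v) < R + 1$ to $\leq R$) is left implicit in the paper's terser statement but is the same underlying step.
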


\begin{proof}
Adopt the notation used in Algorithm~\ref{algo:padded}.
Assume a cross-edge $(u, v)$ in $G$ with $u \in C_i$ and $v \in C_j$ where $i < j$. Then $u$ must be a guard as otherwise $v \in C_i$.
Thus, the clusters without the guards are disconnected. Assume now that $u \in D_i$ and $v \in D_j$ where $i < j$, then $d(u, v) \geq \gamma \alpha$.
\end{proof}

Let us now look at the computational complexity of $\algdec{}$. The bottleneck of the algorithm is computing $B_G(\pi(j), R)$. 
Computing $B_G$ independently leads to a running time of $\bigO(n^3)$. We can speed-up computing $B_G(\pi(j), R)$
by ignoring the nodes $v$ that have shorter distance $d_G(\pi(i), v)$ than the distance $d_G(\pi(j), v)$ for some $i < j$. Then each node is only visited $\bigO(R)$ times, leading to computational complexity $\bigO(Rn^2) \subseteq \bigO(m n^2)$.
In practice, a naive approach is reasonable.

\subsection{Assigning colors to clusters\label{sec:assign}}

We will now describe our final step. In order to do this, assume that $\algdec{}$ has produced a decomposition $\enset{D_1}{D_n}$ such that there is a solution $S \subseteq \bigcup D_i$ satisfying all the constraints and that every node in $S$ is in its own cluster $D_i$, that is, $\abs{D_i \cap S} \leq 1$.

Our final step is to extract $S$, or any solution satisfying the constraints, from $\enset{D_1}{D_n}$.
We will do this by assigning (some of the) clusters $\set{D_i}$ to colors. The assignment will done such that each cluster has a vertex of assigned color, and that there are enough assignments to satisfy all the constraints.

In order to do this, we define a flow network $H = (W, A)$.
An illustration of $H$ is given in Figure~\ref{fig:flow}.
The vertices $W$ contains $n$ nodes, each corresponding to the cluster $D_i$. Moreover, $W$ contains $m$ nodes $c_1, \ldots, c_m$, corresponding to the colors. Finally, the $W$ has additional three nodes $s$, $z$, and $t$.

The edges $A$ are as follows. We connect $s$ to each $D_i$ with unit capacity. For each vertex $u$ in $D_i$ with color $j$, we add an edge from $D_i$ to $c_j$ with unit capacity. Each $c_j$ is connected to $t$ with capacity $\ell_i$ and to $z$ with capacity $u_j - \ell_j$. Finally, $z$ is connected to $t$ with capacity $k - \sum \ell_i$.

\begin{figure}
\begin{tikzpicture}[yscale=0.9]

\tikzstyle{edge}=[->, >=latex,  semithick, yafcolor5, in=180, out=0]

\tikzstyle{node}=[inner sep=1pt]
\tikzstyle{lab}=[inner sep=1pt, above, sloped, black]
\node[node] (s) at (0, 0.5) {$s$};
\node[node] (d1) at (2, 2) {$D_1$};
\node[node] (d2) at (2, 1) {$D_2$};
\node[node] (d3) at (2, -0.5) {$D_n$};
\node[node]  at (2, 0.375) {$\vdots$};

\node[node] (c1) at (4, 2) {$c_1$};
\node[node] (c2) at (4, 1) {$c_2$};
\node[node] (c3) at (4, -0.5) {$c_m$};
\node[node]  at (4, 0.375) {$\vdots$};

\node[node] (slack) at (7, 1.5) {$z$};
\node[node] (t) at (7, 0) {$t$};

\draw[edge] (s) edge (d1);
\draw[edge] (s) edge (d2);
\draw[edge] (s) edge (d3);

\draw[edge] (d1) edge (c1);
\draw[edge] (d1) edge (c2);

\draw[edge] (d2) edge (c3);

\draw[edge] (d3) edge (c2);

\draw[edge] (c1) edge node[lab, pos=0.4] {$u_1 - \ell_1$} (slack);
\draw[edge] (c2) edge node[lab, pos=0.17] {$u_2 - \ell_2$} (slack);
\draw[edge] (c3) edge node[lab, pos=0.3] {$u_m - \ell_m$} (slack);

\draw[edge] (c1) edge node[lab, pos=0.35] {$\ell_1$} (t);
\draw[edge] (c2) edge node[lab, pos=0.3, below] {$\ell_2$} (t);
\draw[edge] (c3) edge node[lab, pos=0.5] {$\ell_m$} (t);

\draw[edge, bend left] (slack) edge node[pos=0.5, inner sep=1pt, anchor=west, black] {$k - \sum \ell_i$}  (t);

\end{tikzpicture}
\caption{Flow network used to solve the assignment of clusters $\set{D_i}$ to colors. Capacities are indicated at edges. Unmarked edges have capacity of 1.}
\label{fig:flow}
\end{figure}
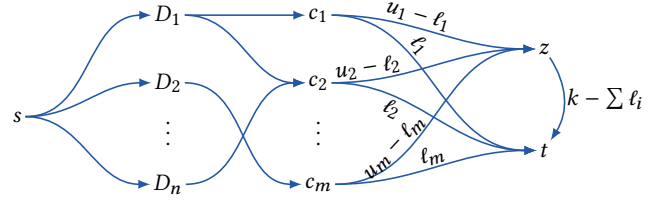

\begin{algorithm}[t]
\caption{$\algflow{\set{D_i}, \set{\ell_i}, \set{u_i},  k}$ tries to find a feasible solution from $\set{D_i}$}\label{algo:flow} 
\begin{algorithmic}[1]
\Require Clusters $D_1, \ldots, D_n$,
constraints $\set{\ell_i}$, $\set{u_i}$, and $k$.

\State construct flow network $H = (W, A)$.
\State solve the (integral) maximum flow problem for $H$.
\State $S \define \emptyset$
\For {$i = 1, \ldots, n$}
    \If {$D_i$ flows to $c_j$}
        \State $u \define $ any vertex in $D_i$ with color $j$.
        \State add $u$ to $S$.
    \EndIf
\EndFor

\Ensure $S$
\end{algorithmic}
\end{algorithm}

The following proposition states that we can extract the solution.

\begin{proposition}
\label{prop:assign}
Assume that there is a feasible set $O$ such that
$\abs{D_i \cap O} \leq 1$ and $O \subseteq \bigcup{D_i}$. Then $\algflow{\{D_i\}, \set{\ell_i}, \set{u_i}, k}$ yields a set $S$ satisfying the constraints.
\end{proposition}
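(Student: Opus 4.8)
The plan is to show that the existence of the feasible set $O$ guarantees a sufficiently large integral flow in $H$, and that any maximum integral flow can be decoded into a valid solution $S$. The argument has two directions: first, I would construct a feasible flow of value $k$ from $O$ to certify that the maximum flow is at least $k$; second, I would show that the decoding loop in Algorithm~\ref{algo:flow} turns any flow of value $k$ into a set $S$ meeting all constraints $\ell_i \le \abs{S \cap V_i} \le u_i$ and $\abs{S} = k$.

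For the first direction, I would use the hypothesis $\abs{D_i \cap O} \le 1$ and $O \subseteq \bigcup D_i$. Each $o \in O$ lies in a unique cluster $D_{i(o)}$ and has some color $c(o)$; route one unit of flow along $s \to D_{i(o)} \to c_{c(o)}$. Because clusters are used at most once, the unit capacities on $s \to D_i$ and on $D_i \to c_j$ are respected. At each color node $c_j$, the incoming flow equals $\abs{O \cap V_j}$, which lies in $[\ell_j, u_j]$ by feasibility of $O$; I would split this as $\ell_j$ units sent directly along $c_j \to t$ and the remaining $\abs{O \cap V_j} - \ell_j \le u_j - \ell_j$ units sent along $c_j \to z$. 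These respect the capacities $\ell_j$ and $u_j - \ell_j$. The total routed into $z$ is $\abs{O} - \sum_j \ell_j = k - \sum_j \ell_j$, which matches the capacity of the edge $z \to t$, so conservation holds at $z$ and the flow has value $k$. Hence the maximum integral flow has value at least $k$.

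For the second direction, let $F$ be a maximum integral flow; by the first part its value is $k$ (it cannot exceed $k$ since every $s$--$t$ path must traverse either some $c_j \to t$ edge or the single $z \to t$ edge, and the total outgoing capacity from the color layer to $t$ is exactly $\sum_j \ell_j + (k - \sum_j \ell_j) = k$). Integrality and unit capacities on the $D_i \to c_j$ edges mean each cluster sends at most one unit, to exactly one color, so the decoded $S$ picks at most one genuine vertex per cluster and $\abs{S}$ equals the value of flow leaving $s$, namely $k$. Reading the flow into each $c_j$: the amount entering $c_j$ equals the number of clusters assigned color $j$, which is exactly $\abs{S \cap V_j}$. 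The capacity $\ell_j$ on $c_j \to t$ together with capacity $u_j - \ell_j$ on $c_j \to z$ caps the throughput of $c_j$ at $u_j$, giving $\abs{S \cap V_j} \le u_j$. The lower bound $\abs{S \cap V_j} \ge \ell_j$ is the crux and follows from a saturation argument: since the flow has value $k$ and the only route to $t$ is through the color nodes, if some $c_j$ carried fewer than $\ell_j$ units, the direct edges $c_j \to t$ could not all be saturated, and I would show this forces the total flow below $k$, contradicting maximality.

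The main obstacle is precisely this last lower-bound step. Establishing $\abs{S \cap V_j} \ge \ell_j$ is not purely local, because slack can be shuffled between the direct edges $c_j \to t$ and the shared edge $z \to t$. The clean way to handle it is to argue that in a value-$k$ flow every $c_j \to t$ edge must be saturated: the sum of the $z \to t$ capacity and all $c_j \to t$ capacities is exactly $k$, so achieving value $k$ requires saturating \emph{all} edges into $t$ simultaneously, forcing $\ell_j$ units through each $c_j \to t$ and hence $\abs{S \cap V_j} \ge \ell_j$. I would state this counting observation carefully, since it is what ties the minimum-cut structure to the fairness lower bounds and is the one place where the gadget design (separating $\ell_j$ from $u_j - \ell_j$) really does the work.
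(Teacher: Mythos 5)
Your proposal is correct and follows essentially the same route as the paper's proof: construct a witness flow of value $k$ from $O$ (routing $\ell_j$ units from each $c_j$ to $t$ and the surplus to $z$), then use integrality of the maximum flow together with the observation that the total capacity into $t$ is exactly $k$, so all edges into $t$ are saturated, which yields both the lower bounds $\ell_j$ and the upper bounds $u_j$ on the decoded set. Your write-up simply spells out the saturation counting argument in more detail than the paper does.
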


\begin{proof}
Consider the following flow. Assume $D_i \cap O = \set{u}$, where $u$ has a color $j$. Then push flow from $s$ to $c_j$ via $D_i$. Let $s_j$ be the number of vertices in $O$ with color $j$. Push a flow of $\ell_j$ from $c_j$ to $t$ and a flow of $s_j - \ell_j$ from $c_j$ to $z$. Finally, push a flow of $z$ from $t$, at maximum capacity. Since $O$ satisfies the constraints, we see that the flow is valid. Moreover, all edges pointed to $t$ are saturated, making this flow optimal with a throughput of $k$.

Let $f$ be an optimal flow. Since the capacities are all integral, we can assume that the flow is integral. This flow must have a throughput of $k$, which can only happen if the edges to $t$ are saturated. Consequently, there
are at least $\ell_i$, and at most $u_i$, clusters flowing to $c_i$. Moreover, there are in total $k$ clusters flowing. Consequently, $S$ constructed by $\algflow{}$ satisfies the constraints.
\end{proof}

Let us consider the computational complexity of $\algflow{}$.
Since every point has only one color, the graph $H$ has $\bigO(n + m)$ edges. A deterministic algorithm by~\citet{van2023deterministic} can solve the flow in $\bigO((n + m)^{1 + o(1)} \log^2 k)$ time. In practice, a standard flow solver such as push-relabel, suffices.

\subsection{Putting everything together}
\label{sec:pet}

Our next step is to combine all three steps together with the main algorithm given by Algorithm~\ref{algo:final}.
The algorithm first preprocesses the space using $\gamma_1$ as the threshold, decomposes the space using $\gamma_2$, and tries to find a feasible solution using maximum flow. We still need to determine $\gamma_1$ and $\gamma_2$, which we will discuss shortly.

\begin{algorithm}[t]
\caption{$\algmmd{V, d, \gamma_1, \gamma_2, b, T}$, main algorithm combining all three steps}\label{algo:final} 
\begin{algorithmic}[1]
\Require Space $(V = V_1 \cup \ldots \cup V_m, d)$, parameters $\gamma_1$, $\gamma_2$, preprocessing budget $b$, constraints $\set{\ell_i}$, $\set{u_i}$, $k$, maximum repetitions $T$.
\State $(U,d) \define \algpre{V, d, \gamma_1, b}$.
\State $S \define \emptyset$.
\MRepeat{ $T \times m$ times or until $S$ satisfies the constraints}
\State $\enset{D_1}{D_n} \define \algdec{U, d, \gamma_2}$.
\State $S \define \algflow{\set{D_i}, \set{\ell_i}, \set{k_i},  k}$.
\EndRepeat

\Ensure $S$
\end{algorithmic}
\end{algorithm}

Next we state the approximation ratio and the computational complexity of $\algmmd{}$. The first set of parameters lead to a slower algorithm due to $\bigO(n^2)$ term but yields a better guarantee.

\begin{proposition}
\label{prop:together1}
Assume a space $(V, d)$ with $n$ points with a set $O$ satisfying the constraints and having a diversity score $\diver{O} \geq \tau$.
Then $\algmmd{V, d, \tau / 3, \tau / 3, n, T}$ yields a solution $S$ with $\diver{S} \geq \sqrt{\log m}\tau / (3m)$
with probability $1 - 2^{-\Omega(T)}$. The computational complexity of the algorithm is in $\bigO(Tm^2n^2)$.
\end{proposition}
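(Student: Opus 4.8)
The plan is to assemble the four preceding results into a single chain, instantiating the generic parameters of Algorithm~\ref{algo:final} as $\gamma_1 = \gamma_2 = \tau/3$ and $b = n$. First I would invoke Proposition~\ref{prop:pre1} on $U = \algpre{V, d, \tau/3, n}$: since $O$ satisfies the constraints with $\diver{O} \geq \tau$, the proposition yields a set $S' \subseteq U$ that still satisfies the constraints and has $\diver{S'} \geq \tau/3$. At the same time, the lemma preceding Proposition~\ref{prop:pre1} guarantees that any two same-color points of $U$ lie at distance at least $\gamma_1 = \tau/3$. These two facts are precisely the hypotheses needed to feed $U$ into the decomposition step with $\gamma_2 = \tau/3$.

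Second, for each repetition I would apply Proposition~\ref{prop:decompose} to $\algdec{U, d, \tau/3}$. Using $S'$ as the witness $O$, the same-color separation just established, and $k \leq m$, the proposition says that with probability $\Omega(1/m)$ the returned clusters $\enset{D_1}{D_n}$ satisfy $\abs{S' \cap D_i} \leq 1$ for all $i$ and $S' \subseteq \bigcup_i D_i$. On this event Proposition~\ref{prop:assign} (with $O = S'$) certifies that $\algflow{}$ outputs a set $S$ satisfying all the fairness constraints. The crucial observation is that the diversity guarantee does \emph{not} depend on this event: by construction the unit capacity from $s$ to each $D_i$ forces $\abs{S \cap D_i} \leq 1$, so Proposition~\ref{prop:diverse} applies to every output of $\algflow{}$ and gives $\diver{S} \geq \gamma_2 \alpha = \sqrt{\log m}\,\tau/(3m)$. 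Hence it suffices to argue that some repetition returns a constraint-satisfying $S$; the diversity bound then comes for free.

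Third comes the probability amplification. Each of the $T \times m$ repetitions draws a fresh permutation $\pi$ and radius $R$, so the successes are independent, each occurring with probability at least $c/m$ for some constant $c$ (the $\Omega(1/m)$ of Proposition~\ref{prop:decompose}). The probability that no repetition yields a good decomposition is therefore at most $(1 - c/m)^{Tm} \leq \exp(-cT) = 2^{-\Omega(T)}$. Since Algorithm~\ref{algo:final} halts as soon as $S$ satisfies the constraints, with probability $1 - 2^{-\Omega(T)}$ it terminates with a feasible $S$, which by the previous paragraph automatically has $\diver{S} \geq \sqrt{\log m}\,\tau/(3m)$.

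Finally, the running time is additive across the phases: $\algpre{V, d, \tau/3, n}$ costs $\bigO(n^2)$, and each of the $Tm$ rounds runs $\algdec{}$ in $\bigO(mn^2)$ (the bound $\bigO(Rn^2) \subseteq \bigO(mn^2)$ from Section~\ref{sec:decomp}) plus a negligible flow computation on a graph with $\bigO(n+m)$ edges. Summing gives $\bigO(n^2) + \bigO(Tm \cdot mn^2) = \bigO(Tm^2n^2)$. I expect the only genuinely delicate point to be the amplification step: one must verify that the per-round success events are independent (which holds because each call to $\algdec{}$ uses independent randomness) and that the constant hidden in $\Omega(1/m)$ is uniform in $m$, so that $(1-c/m)^{Tm}$ collapses cleanly to $2^{-\Omega(T)}$; the remainder is bookkeeping that follows directly from the already-established propositions.
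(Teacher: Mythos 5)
Your proof is correct and takes essentially the same route as the paper's: chain Propositions~\ref{prop:pre1}, \ref{prop:decompose}, \ref{prop:diverse}, and~\ref{prop:assign} for the approximation guarantee, amplify the per-round $\Omega(1/m)$ success probability over $Tm$ independent repetitions to obtain $1-(1-\Omega(1/m))^{Tm} \geq 1-2^{-\Omega(T)}$, and sum the phase costs (preprocessing $\bigO(n^2)$, $Tm$ decompositions at $\bigO(mn^2)$ each, negligible flow) to $\bigO(Tm^2n^2)$. Your explicit observation that the diversity bound holds for \emph{every} output of $\algflow{}$ (since unit capacities force $\abs{S \cap D_i} \leq 1$), not only on the success event, is a detail the paper leaves implicit and is a correct refinement.
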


\begin{proof}
Propositions~\ref{prop:pre1},~\ref{prop:decompose},~\ref{prop:diverse},~and~\ref{prop:assign} imply the approximation guarantee. Repeating the decomposition phase (line 3 in Algorithm~\ref{algo:final}) together with Proposition~\ref{prop:decompose} gives an overall success probability of
\[
1-(1-\Omega(1/m))^{Tm} \geq 1 - 2^{-\Omega(T)}.
\]
The running time of the algorithm is equal to
\[
\bigO(n^2 + Tm^2n^2 + Tm(n + m)^{1 + o(1)} \log^2 k) \in \bigO(Tm^2n^2),
\]
proving the claim.
\end{proof}

The second set of parameters lead to a faster algorithm but yields a worse guarantee.

\begin{proposition}
\label{prop:together2}
Assume a space $(V, d)$ with $n$ points with a set $O$ satisfying the constraints and having a diversity score $\diver{O} \geq \tau$.
Then $\algmmd{V, d, 2\tau / 5, \tau / 5, k, T}$ yields a solution $S$ with $\diver{S} \geq \sqrt{\log m}\tau / (5m)$
with probability $1 - 2^{-\Omega(T)}$. The computational complexity of the algorithm is in $\bigO(kmn + T m^3 k^2)$.
\end{proposition}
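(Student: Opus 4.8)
The plan is to run the same three-phase argument used for Proposition~\ref{prop:together1}, substituting the budgeted preprocessing guarantee (Proposition~\ref{prop:pre2}) for the full one (Proposition~\ref{prop:pre1}) and carrying the weaker constant $1/5$ through the pipeline in place of $1/3$. First I would invoke Proposition~\ref{prop:pre2} with the given $\tau$: since $O$ is feasible with $\diver{O} \geq \tau$, the set $U = \algpre{V, d, 2\tau/5, k}$ contains a feasible $S_0 \subseteq U$ with $\diver{S_0} \geq \tau/5$. Before feeding $U$ into the decomposition phase run at threshold $\gamma = \tau/5$, two hypotheses of Proposition~\ref{prop:decompose} must be checked. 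One is that $U$ hosts a feasible solution of diversity at least $\gamma$, which is exactly $S_0$. The other is the same-color separation requirement $d(u,v) \geq \gamma$ for all same-color $u,v \in U$; here the preprocessing lemma only promises separation $2\tau/5$, but this is the delicate point that works out cleanly because $2\tau/5 \geq \tau/5 = \gamma$. As in the slower variant, the regime $k \leq m$ is inherited and is needed precisely so that Proposition~\ref{prop:decompose} applies.

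With the hypotheses verified, the approximation bound chains exactly as before. Taking $S_0$ as the distinguished feasible set, Proposition~\ref{prop:decompose} yields, with probability $\Omega(1/m)$, a decomposition $\enset{D_1}{D_n}$ with $\abs{S_0 \cap D_i} \leq 1$ and $S_0 \subseteq \bigcup D_i$; in particular a feasible set meeting at most one point per cluster exists. Proposition~\ref{prop:assign} then guarantees that $\algflow{}$ outputs a set $S$ satisfying all constraints, and because $\algflow{}$ adds at most one vertex from each $D_i$, Proposition~\ref{prop:diverse} gives $\diver{S} \geq \gamma\alpha = (\tau/5)(\sqrt{\log m}/m) = \sqrt{\log m}\,\tau/(5m)$, which is the claimed bound.

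For the success probability I would reuse the amplification step unchanged: the inner loop of Algorithm~\ref{algo:final} runs $Tm$ times, each iteration independently succeeding with probability $\Omega(1/m)$, so the failure probability is at most $\pr{1 - \Omega(1/m)}^{Tm} \leq 2^{-\Omega(T)}$. The genuinely different part is the running time, and I expect the bookkeeping there to be the main obstacle rather than anything conceptual. The key leverage is that the budgeted preprocessing costs $\bigO(nkm)$ and, unlike the full pruning, leaves only $\bigO(km)$ points, so every one of the $\bigO(Tm)$ decomposition-and-flow rounds operates on an instance of size $\bigO(km)$ instead of $n$. Summing the $\bigO(nkm)$ preprocessing cost with the aggregate cost of these rounds on the reduced instance yields the stated $\bigO(kmn + Tm^3k^2)$ bound; the careful accounting is simply tracking how the reduced point count $\bigO(km)$ propagates into the per-round decomposition cost (the dominant term) and confirming the flow cost is subsumed.
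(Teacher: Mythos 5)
Your proposal is correct and follows essentially the same route as the paper's own proof: it chains Propositions~\ref{prop:pre2}, \ref{prop:decompose}, \ref{prop:diverse}, and~\ref{prop:assign} for the approximation bound, amplifies the $\Omega(1/m)$ per-round success probability over the $Tm$ repetitions, and accounts for the runtime as the $\bigO(kmn)$ preprocessing plus $Tm$ decomposition-and-flow rounds on the $\bigO(km)$-point pruned instance, arriving at the same $\bigO(kmn + Tm^3k^2)$ bound. Your explicit check that the pruning separation $2\tau/5$ dominates the decomposition threshold $\gamma = \tau/5$, so that the same-color hypothesis of Proposition~\ref{prop:decompose} is satisfied, is a detail the paper leaves implicit but is exactly the right thing to verify.
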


\begin{proof}
Propositions~\ref{prop:pre2},~\ref{prop:decompose},~\ref{prop:diverse},~and~\ref{prop:assign} imply the approximation guarantee. Repeating the decomposition phase together with Proposition~\ref{prop:decompose} guarantees the probability. The running time of the algorithm is equal to 
\[
    \bigO(kmn + Tm(km)^2 + Tm(km + m)^{1 + o(1)} \log^2 k) \in \bigO(kmn + Tm^3 k^2),
\]
proving the claim.
\end{proof}

Both propositions require $\tau$, the optimal score. A naive option would be to test all existing distances, multiplying the running time by a factor of $\bigO(n^2)$. A faster alternative option is to do a grid search by considering candidates of form $\tau_{i} = \tau_0 (1 + \epsilon)^i$, where $\tau_0$ is a distance between any two points and $i$ is an integer. We can lower bound $i$ as we know that if $\tau_i$ is the smallest non-zero possible distance, then \algmmd{} will find a solution. On the other hand, if $\tau_i$ yields $\gamma_2$ that is larger than the largest distance in the pruned space, then there is no solution. Let $\rho$ be the ratio of the largest distance and the smallest non-zero distance in $V$. Then we need only $\bigO(\log_{1 + \epsilon} \rho) \subseteq \bigO(\epsilon^{-1} \log \rho)$ tests.
This comes at a cost to the quality of the solution by a factor of $1/(1 + \epsilon)$, proving the approximation results stated in Table~\ref{table:summaryresults}.

\ptitle{The $k > m$ regime} 
We can extend $\algmmd{}$ to work with $k > m$ by adding $k - m$ artificial colors with with $u_i = \ell_i = 0$. This comes yields a weaker guarantee $\bigO(\sqrt{\log(m')}/(3m'))$, where $m' = \max(m, k)$. Moreover, the running times of the two algorithms change to 
$\bigO(Tmn^2 m')$
and
$\bigO(kmn + Tm^2 k^2 m')$.

Alternatively, we can obtain the original guarantees, but with a slower algorithm. To this end, note that
Proposition~\ref{prop:decompose} proves a success probability of $\Omega(1/m)$ for a single repetition of line 3 in Algorithm~\ref{algo:final}, in case $k \leq m$. Assume that  $k$ is proportional to $m$, say $k \leq \beta m$ for some $\beta \geq 1$. Proposition~\ref{prop:decompose} would then yield a success probability of $\Omega(1/m^{\beta})$.
Therefore, Algorithm~\ref{algo:final} (line 3) would need $T \times m^{\beta}$ repetitions to ensure that Algorithm~\ref{algo:final} succeeds with probability $1 - 2^{-\Omega(T)}$ (see Propositions~\ref{prop:together1} and \ref{prop:together2}). The running time of such algorithm is polynomial only when $\beta$ remains a constant.

\subsection{Practical improvements}
\label{prac:improv}

In order to improve the quality of the solution in practice, we do the following additional steps.

During preprocessing we do not treat colors individually. Instead we always select the point of a color that we need that is furthest from already added points. This increases the preprocessing time by $\bigO(\log n)$ due to sort.

When forming $G$ in \algdec{} it turns out that the threshold $\gamma_2 \alpha$ is too small, resulting in subpar results. To combat this we vary $\gamma_2$ independently, instead of having it fixed to $\tau$. More specifically, we vary $\gamma_2$ between $\gamma_1 / 2$ and $\gamma_1 / \alpha$ using a similar $(1 + \epsilon)$ grid as with $\tau$.
This increases the running time of the last two steps by $\bigO(\epsilon^{-1} \log \alpha) \subseteq \bigO(\epsilon^{-1} \log m)$.

Finally, we found out that in practice \algmmd{} yields good results after only a few iterations. Thus in experiments, we perform \algdec{} only a constant number of times.
\section{Experimental evaluation}
In this section, we describe our experimental evaluation.
All experiments were performed on an Intel Core i5 machine at 1.6 GHz with 16 GB RAM. Our methods are implemented in Python 3.8 and will be made publicly available.

\ptitle{Data} For data, we use the four real-world public datasets that were previously used for evaluation by \citet{Yanhao1} in their work on \FMMD.
The dataset characteristics are listed in Table~\ref{tab:datasets}.
Here, \emph{dim} refers to the dimension of the numerical feature vectors. As a distance measure, we use euclidean distance. For the \emph{CelebA} dataset, the original dimension of the feature vectors was $25\,088$. We used sparse random projections \cite{achlioptas2001database} to reduce the dimension to 10.
In all datasets, labels are defined from demographic
attributes such as sex, race and age.
For more details on data preparation, and other statistics such as relative proportion of each label within the data, see \cite[Appendix A]{Yanhao1}.

\ptitle{Baselines} We compare our algorithm $\algmmd{}$ with the three most scalable methods listed in Table~\ref{table:summaryresults} (Section~\ref{sec:intro}).
The polynomial time approximation algorithm from \cite{Moumoulidou2020DiverseDS} is denoted as $\fairflow{}$. The approximation algorithm from \cite{Addanki0MM22} is denoted as $\greedyflow{}$.
The scalable coreset approximation based on integer linear programming from \cite{Yanhao1} is denoted as $\silp{}$.
We use the publicly available Python implementations from \cite{Yanhao1} for these methods. 
The other methods from Table~\ref{table:summaryresults} are exponential time algorithms that are not scalable to large datasets \cite{Yanhao1} and are not used for evaluation.

\ptitle{Parameters} We use similar parameter selection as in \cite{Yanhao1}. The fairness constraints are defined according to \emph{proportional representation} \cite{celis2018fair}. Namely, for each label $i \in [m]$ we set $\ell_i = \max(1,(1-\alpha)k\frac{|V_i|}{|V|})$ and $u_i = \max(1,(1+\alpha)k|V_i|/n)$ for $\alpha=0.2$.
For $\fairflow{}$ and $\greedyflow{}$ we set $k_i = \left \lceil{k|V_i|/n}\right \rceil$ or $k_i = \left \lfloor{k|V_i|/n}\right \rfloor$.
All algorithms are repeated 5 times, and we show the average diversity scores and average running time in each experiment. We set $\epsilon=0.1$ in the grid search for $\algmmd{}$ and $\greedyflow{}$ and in $\silp{}$.
We used the faster version of \algmmd{} and ran \algdec{} three times.

\ptitle {Experimental results} Figures~\ref{fig:divsmall}--\ref{fig:timesmall} compare the diversity scores and the running times of \algmmd{}, \fairflow{}, \greedyflow{} and \silp{} for increasing values of $k$.
For this experiment in particular, we sampled 1000 items uniformly at random from the full datasets, similar to the experimental setup in \cite{Yanhao1}.
Observe that \algmmd{} always outperforms the other polynomial time algorithms \fairflow{} and \greedyflow{} on the diversity score, and is often faster as well. Note that this is partially explainable by tighter constraints; however in our experiments \algmmd{} also outperformed the other polynomial time algorithms when used with the same constraints. 
\algmmd{} is competitive with the exponential time algorithm \silp{}.

Table~\ref{tab:speedfulldata} and Table~\ref{tab:performfulldata} show the performance on the full datasets for $k=15$.
\silp{} achieves the best performance, but \algmmd{} is a close second and outperforms \fairflow{} and \greedyflow{}.
We did not test on the \emph{CelebA} dataset, since the full data was not present in the repository of \cite{Yanhao1}. \greedyflow{} did not provide a solution within 24 hours on the \emph{Census} data set.

To compare the scalability of the different algorithms, we generate synthetic datasets.
Using the implementation of \cite{Yanhao1} we generate ten two-dimensional Gaussian isotropic clouds with random centers in $[-10,10]^2$ and identity covariance matrices. We set $k=30$. First, we fix $n=1000$ and increase $m$.
Then, we fix $m=3$ and increase the graph size $n$.
Figure~\ref{fig:syn} shows the results.
\greedyflow{} did not finish within 24 hours for $n=10^6$.
\algmmd{} scales worse for increasing $m$ when compared to the other baselines. This is mainly due to two reasons. First, our prepossessing step is repeated for each optimal guess $\gamma$ (Section~\ref{sec:decomp}), while \fairflow{} only does a greedy preprocessing once to obtain $km$ nodes, independent of the guess.
Secondly, the CKR partitions from the decomposition phase (Section~\ref{sec:decomp}) 
require a BFS search for each of the $km$ nodes. This is rather costly for increasing values of $m$.
Nonetheless, \algmmd{} achieves better diversity scores than the other polynomial time approximation algorithms \fairflow{} and \greedyflow{}, and is slightly worse than $\silp{}$.
\silp{} is tailored for performance and scalability to large datasets (by using fast ILP solvers), but has exponential running time in the worst case.

\begin{table}[t]
  \caption{Real-world datasets used for evaluation.}
  \label{tab:datasets}
  \begin{tabular}{@{}lrrrr@{}}
    \toprule
    Data \cite{Yanhao1}  & labels & $m$ & $n$ & dim\\
    \midrule
    \emph{Adult (S)} & sex & 2 & 48\,482 & 6 \\
    \emph{Adult (R)} & race & 5 &  &  \\
    \emph{Adult (S+R)} & sex+race & 10 &  &  \\
    \hline
    \emph{CelebA (S)} & sex & 2 & 202\,599 & 10 \\
    \emph{CelebA (A)} & age & 2 &  & \\
    \emph{CelebA (S+A)} & sex+age & 4 &  & \\
    \hline
    \emph{Census (S)} & sex & 2 & 2\,426\,116 & 25 \\
    \emph{Census (A)} & age & 7 & &  \\
    \emph{Census (S+A)} & sex+age & 14 & &  \\
    \hline
    \emph{Twitter (S)} & sex & 3 & 18\,836 & 1,024 \\
  \bottomrule
\end{tabular}
\end{table}

\begin{table}[] \centering
\caption{Running time (s) of different algorithms on the full datasets from Table~\ref{tab:datasets} for fixed parameter value $k=15$.}
\label{tab:speedfulldata}
\tcbset{colframe=black!5, colback=black!5, size=fbox, on line}
\begin{tabular}{@{}lrrrr@{}}\toprule
Data & \fairflow{} & \silp{} & \greedyflow{} & \algmmd{} \\
\midrule
\emph{Adult (S)} & 12.88 & 18.07 & 117.93 & 14.93 \\
\emph{Adult (R)} & 6.85 & 14.53 & 50.48 & 19 \\
\emph{Adult (S+R)} & 11.48 & 21.48 & 44.6 & 45 \\
\emph{Census (S)} & 516.88 & 688 & - & 492.46\\
\emph{Census (A)} & 658.6 & 1130.19 & - & 1630.8 \\
\emph{Census (S+A)} & 892.26 & 843.28 & - & 2289.88 \\
\emph{Twitter (S)} & 29.71 & 58.23 & 175.31 & 14.61 \\
\bottomrule
\end{tabular}
\end{table}

\begin{table}[] \centering
\caption{Diversity scores of different algorithms on the full datasets from Table~\ref{tab:datasets} for fixed parameter value $k=15$.}
\label{tab:performfulldata}
\tcbset{colframe=black!5, colback=black!5, size=fbox, on line}
\begin{tabular}{@{}lrrrr@{}}\toprule
Data & \fairflow{} & \silp{} & \greedyflow{} & \algmmd{} \\
\midrule
\emph{Adult (S)} & 3.63 & 5.93 & 1.57 & 5.18 \\
\emph{Adult (R)} & 1.67 & 5.49 & 1.13 & 3.96 \\
\emph{Adult (S+R)} & 0.96 & 5.25 & 0.76 & 2.94 \\
\emph{Census (S)} & 9.64 & 13.30 & - & 13.11\\
\emph{Census (A)} & 5.57 & 13.30 & - & 11.18 \\
\emph{Census (S+A)} & 1.01 & 13.38 & - & 9.11 \\
\emph{Twitter (S)} & 21.25 & 27.53 & 22.05 & 26.08 \\
\bottomrule
\end{tabular}
\end{table}

\begin{figure*}

\setlength{\figwidth}{3.9cm}
\setlength{\figheight}{3.3cm}
\setlength{\tabcolsep}{0pt}

\begin{tabular}{rrrrr}

\begin{tikzpicture}
\begin{axis}[xlabel={$k$}, ylabel= {$\diver{S}$},
    width = \figwidth,
    height = \figheight,
    title = {Adult (S)},
    ymin = 0.8,
    ymax = 7,
    scaled y ticks = false,
    cycle list name=yaf,
    yticklabel style={/pgf/number format/fixed},
    no markers,
    xtick = {10, 30, 50},
    ytick = {1, 3, 5, 7},
    legend columns = 1,
    legend style = {row sep=-2pt, inner sep=0pt},
    legend entries = {\labflow, \labilp, \labgrd, \labmmd}
    ]

\addplot table[x index = 0, y index = 1, header = false, col sep=comma] {results/div_Adults_C2.csv};
\addplot table[x index = 0, y index = 2, header = false, col sep=comma] {results/div_Adults_C2.csv};
\addplot table[x index = 0, y index = 3, header = false, col sep=comma] {results/div_Adults_C2.csv};
\addplot table[x index = 0, y index = 4, header = false, col sep=comma] {results/div_Adults_C2.csv};

\pgfplotsextra{\yafdrawaxis{5}{50}{1}{7}}
\end{axis}
\end{tikzpicture} &

\begin{tikzpicture}
\begin{axis}[xlabel={$k$}, ylabel= {$\diver{S}$},
    width = \figwidth,
    height = \figheight,
    title = {Adult (R)},
    ymin = 0,
    scaled y ticks = false,
    cycle list name=yaf,
    yticklabel style={/pgf/number format/fixed},
    no markers,
    xtick = {10, 30, 50},
    ]
\addplot table[x index = 0, y index = 1, header = false, col sep=comma] {results/div_Adults_C5.csv};
\addplot table[x index = 0, y index = 2, header = false, col sep=comma] {results/div_Adults_C5.csv};
\addplot table[x index = 0, y index = 3, header = false, col sep=comma] {results/div_Adults_C5.csv};
\addplot table[x index = 0, y index = 4, header = false, col sep=comma] {results/div_Adults_C5.csv};

\pgfplotsextra{\yafdrawaxis{5}{50}{0}{4.4}}
\end{axis}
\end{tikzpicture} &

\begin{tikzpicture}
\begin{axis}[xlabel={$k$}, ylabel= {$\diver{S}$},
    width = \figwidth,
    height = \figheight,
    title = {Adult (S+R)},
    ymin = 0,
    ymax = 3,
    scaled y ticks = false,
    cycle list name=yaf,
    yticklabel style={/pgf/number format/fixed},
    no markers,
    xtick = {10, 30, 50},
    ]
\addplot table[x index = 0, y index = 1, header = false, col sep=comma] {results/div_Adults_C10.csv};
\addplot table[x index = 0, y index = 2, header = false, col sep=comma] {results/div_Adults_C10.csv};
\addplot table[x index = 0, y index = 3, header = false, col sep=comma] {results/div_Adults_C10.csv};
\addplot table[x index = 0, y index = 4, header = false, col sep=comma] {results/div_Adults_C10.csv};

\pgfplotsextra{\yafdrawaxis{10}{50}{0}{3}}
\end{axis}
\end{tikzpicture} &

\begin{tikzpicture}
\begin{axis}[xlabel={$k$}, ylabel= {$\diver{S}$},
    width = \figwidth,
    height = \figheight,
    title = {Celeb (S)},
    scaled y ticks = false,
    cycle list name=yaf,
    yticklabel style={/pgf/number format/fixed},
    no markers,
    xtick = {10, 30, 50},
    ]
\addplot table[x index = 0, y index = 1, header = false, col sep=comma] {results/div_CelabA_C2.csv};
\addplot table[x index = 0, y index = 2, header = false, col sep=comma] {results/div_CelabA_C2.csv};
\addplot table[x index = 0, y index = 3, header = false, col sep=comma] {results/div_CelabA_C2.csv};
\addplot table[x index = 0, y index = 4, header = false, col sep=comma] {results/div_CelabA_C2.csv};

\pgfplotsextra{\yafdrawaxis{5}{50}{920}{2700}}
\end{axis}
\end{tikzpicture} &

\begin{tikzpicture}
\begin{axis}[xlabel={$k$}, ylabel= {$\diver{S}$},
    width = \figwidth,
    height = \figheight,
    title = {Celeb (A)},
    scaled y ticks = false,
    cycle list name=yaf,
    yticklabel style={/pgf/number format/fixed},
    no markers,
    xtick = {10, 30, 50},
    ]
\addplot table[x index = 0, y index = 1, header = false, col sep=comma] {results/div_CelabA_C2b.csv};
\addplot table[x index = 0, y index = 2, header = false, col sep=comma] {results/div_CelabA_C2b.csv};
\addplot table[x index = 0, y index = 3, header = false, col sep=comma] {results/div_CelabA_C2b.csv};
\addplot table[x index = 0, y index = 4, header = false, col sep=comma] {results/div_CelabA_C2b.csv};

\pgfplotsextra{\yafdrawaxis{5}{50}{880}{2600}}
\end{axis}
\end{tikzpicture} \\

\begin{tikzpicture}
\begin{axis}[xlabel={$k$}, ylabel= {$\diver{S}$},
    width = \figwidth,
    height = \figheight,
    title = {Celab (S+A)},
    scaled y ticks = false,
    cycle list name=yaf,
    yticklabel style={/pgf/number format/fixed},
    no markers,
    xtick = {10, 30, 50},
    ]
\addplot table[x index = 0, y index = 1, header = false, col sep=comma] {results/div_CelabA_C4.csv};
\addplot table[x index = 0, y index = 2, header = false, col sep=comma] {results/div_CelabA_C4.csv};
\addplot table[x index = 0, y index = 3, header = false, col sep=comma] {results/div_CelabA_C4.csv};
\addplot table[x index = 0, y index = 4, header = false, col sep=comma] {results/div_CelabA_C4.csv};

\pgfplotsextra{\yafdrawaxis{5}{50}{777}{2600}}
\end{axis}
\end{tikzpicture} &

\begin{tikzpicture}
\begin{axis}[xlabel={$k$}, ylabel= {$\diver{S}$},
    width = \figwidth,
    height = \figheight,
    title = {Census (S)},
    ymin = 4,
    ymax = 15,
    scaled y ticks = false,
    cycle list name=yaf,
    yticklabel style={/pgf/number format/fixed},
    no markers,
    xtick = {10, 30, 50},
    ]
\addplot table[x index = 0, y index = 1, header = false, col sep=comma] {results/div_Census_C2.csv};
\addplot table[x index = 0, y index = 2, header = false, col sep=comma] {results/div_Census_C2.csv};
\addplot table[x index = 0, y index = 3, header = false, col sep=comma] {results/div_Census_C2.csv};
\addplot table[x index = 0, y index = 4, header = false, col sep=comma] {results/div_Census_C2.csv};

\pgfplotsextra{\yafdrawaxis{5}{50}{4}{15}}
\end{axis}
\end{tikzpicture} &

\begin{tikzpicture}
\begin{axis}[xlabel={$k$}, ylabel= {$\diver{S}$},
    width = \figwidth,
    height = \figheight,
    title = {Census (A)},
    ymin = 0,
    ymax = 12,
    scaled y ticks = false,
    cycle list name=yaf,
    yticklabel style={/pgf/number format/fixed},
    no markers,
    xtick = {10, 30, 50},
    ]
\addplot table[x index = 0, y index = 1, header = false, col sep=comma] {results/div_Census_C7.csv};
\addplot table[x index = 0, y index = 2, header = false, col sep=comma] {results/div_Census_C7.csv};
\addplot table[x index = 0, y index = 3, header = false, col sep=comma] {results/div_Census_C7.csv};
\addplot table[x index = 0, y index = 4, header = false, col sep=comma] {results/div_Census_C7.csv};

\pgfplotsextra{\yafdrawaxis{10}{50}{0}{12}}
\end{axis}
\end{tikzpicture} &

\begin{tikzpicture}
\begin{axis}[xlabel={$k$}, ylabel= {$\diver{S}$},
    width = \figwidth,
    height = \figheight,
    title = {Census (S+A)},
    xmin = 15,
    ymin = 0,
    ymax = 10,
    scaled y ticks = false,
    cycle list name=yaf,
    yticklabel style={/pgf/number format/fixed},
    no markers,
    xtick = {15, 32, 50},
    ]
\addplot table[x index = 0, y index = 1, header = false, col sep=comma] {results/div_Census_C14.csv};
\addplot table[x index = 0, y index = 2, header = false, col sep=comma] {results/div_Census_C14.csv};
\addplot table[x index = 0, y index = 3, header = false, col sep=comma] {results/div_Census_C14.csv};
\addplot table[x index = 0, y index = 4, header = false, col sep=comma] {results/div_Census_C14.csv};

\pgfplotsextra{\yafdrawaxis{15}{50}{0}{10}}
\end{axis}
\end{tikzpicture} &

\begin{tikzpicture}
\begin{axis}[xlabel={$k$}, ylabel= {$\diver{S}$},
    width = \figwidth,
    height = \figheight,
    title = {Twitter (S)},
    ymin = 15,
    ymax = 30,
    scaled y ticks = false,
    cycle list name=yaf,
    yticklabel style={/pgf/number format/fixed},
    no markers,
    xtick = {10, 30, 50},
    ]
\addplot table[x index = 0, y index = 1, header = false, col sep=comma] {results/div_Twitter_C3.csv};
\addplot table[x index = 0, y index = 2, header = false, col sep=comma] {results/div_Twitter_C3.csv};
\addplot table[x index = 0, y index = 3, header = false, col sep=comma] {results/div_Twitter_C3.csv};
\addplot table[x index = 0, y index = 4, header = false, col sep=comma] {results/div_Twitter_C3.csv};

\pgfplotsextra{\yafdrawaxis{5}{50}{15}{30}}
\end{axis}
\end{tikzpicture}

\end{tabular}

\caption{Diversity scores of the solutions of different algorithms versus the solution size $k$ on a sample of $n=1000$.\label{fig:divsmall}}

\end{figure*}
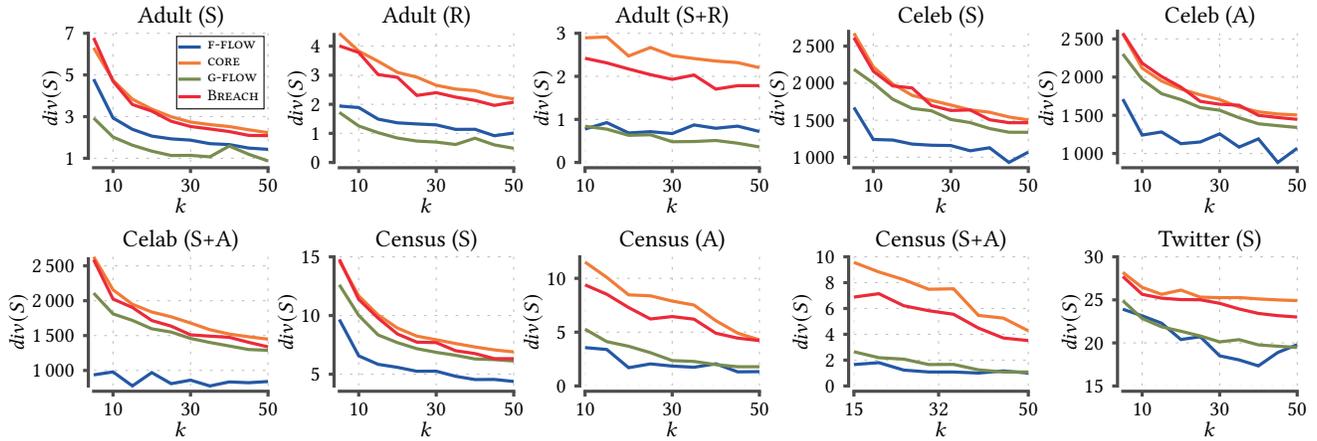

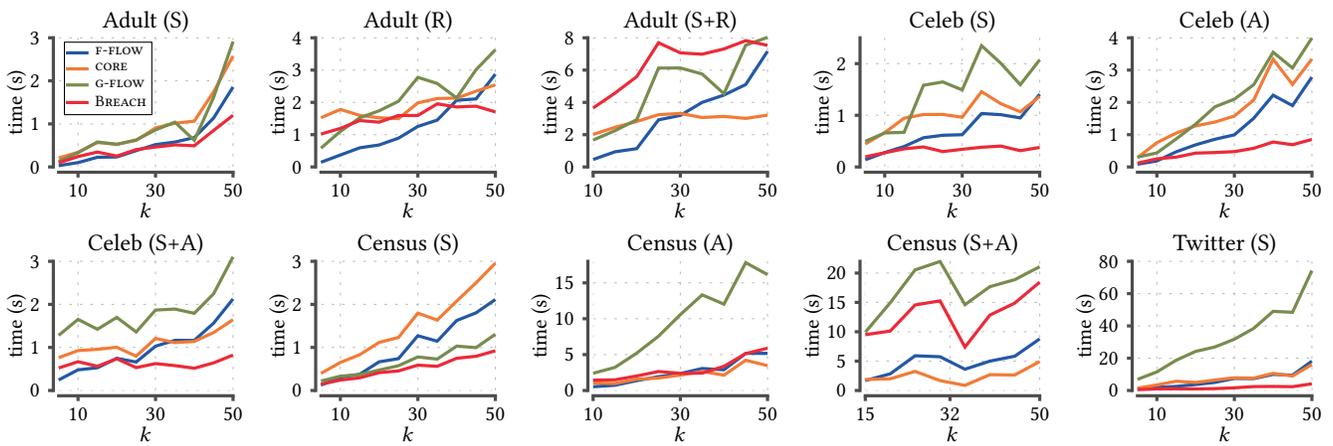
\begin{figure*}

\setlength{\figwidth}{3.9cm}
\setlength{\figheight}{3.3cm}

\begin{tabular}{rrrrr}

\begin{tikzpicture}
\begin{axis}[xlabel={$k$}, ylabel= {time (s)},
    width = \figwidth,
    height = \figheight,
    title = {Adult (S)},
    ymin = 0,
    ymax = 3,
    scaled y ticks = false,
    cycle list name=yaf,
    yticklabel style={/pgf/number format/fixed},
    no markers,
    xtick = {10, 30, 50},
    legend columns = 1,
    legend style = {row sep=-2pt, inner sep=0pt},
    legend pos = {north west},
    legend entries = {\labflow, \labilp, \labgrd, \labmmd}
    ]

\addplot table[x index = 0, y index = 1, header = false, col sep=comma] {results/time_Adults_C2.csv};
\addplot table[x index = 0, y index = 2, header = false, col sep=comma] {results/time_Adults_C2.csv};
\addplot table[x index = 0, y index = 3, header = false, col sep=comma] {results/time_Adults_C2.csv};
\addplot table[x index = 0, y index = 4, header = false, col sep=comma] {results/time_Adults_C2.csv};

\pgfplotsextra{\yafdrawaxis{5}{50}{0}{3}}
\end{axis}
\end{tikzpicture} &

\begin{tikzpicture}
\begin{axis}[xlabel={$k$}, ylabel= {time (s)},
    width = \figwidth,
    height = \figheight,
    title = {Adult (R)},
    ymin = 0,
    ymax = 4,
    scaled y ticks = false,
    cycle list name=yaf,
    yticklabel style={/pgf/number format/fixed},
    no markers,
    xtick = {10, 30, 50},
    ]
\addplot table[x index = 0, y index = 1, header = false, col sep=comma] {results/time_Adults_C5.csv};
\addplot table[x index = 0, y index = 2, header = false, col sep=comma] {results/time_Adults_C5.csv};
\addplot table[x index = 0, y index = 3, header = false, col sep=comma] {results/time_Adults_C5.csv};
\addplot table[x index = 0, y index = 4, header = false, col sep=comma] {results/time_Adults_C5.csv};

\pgfplotsextra{\yafdrawaxis{5}{50}{0.}{4}}
\end{axis}
\end{tikzpicture} &

\begin{tikzpicture}
\begin{axis}[xlabel={$k$}, ylabel= {time (s)},
    width = \figwidth,
    height = \figheight,
    title = {Adult (S+R)},
    ymin = 0,
    ymax = 8,
    scaled y ticks = false,
    cycle list name=yaf,
    yticklabel style={/pgf/number format/fixed},
    no markers,
    xtick = {10, 30, 50},
    ]
\addplot table[x index = 0, y index = 1, header = false, col sep=comma] {results/time_Adults_C10.csv};
\addplot table[x index = 0, y index = 2, header = false, col sep=comma] {results/time_Adults_C10.csv};
\addplot table[x index = 0, y index = 3, header = false, col sep=comma] {results/time_Adults_C10.csv};
\addplot table[x index = 0, y index = 4, header = false, col sep=comma] {results/time_Adults_C10.csv};

\pgfplotsextra{\yafdrawaxis{10}{50}{0}{8}}
\end{axis}
\end{tikzpicture} &

\begin{tikzpicture}
\begin{axis}[xlabel={$k$}, ylabel= {time (s)},
    width = \figwidth,
    height = \figheight,
    title = {Celeb (S)},
    ymin = 0,
    ymax = 2.5,
    scaled y ticks = false,
    cycle list name=yaf,
    yticklabel style={/pgf/number format/fixed},
    no markers,
    xtick = {10, 30, 50},
    ]
\addplot table[x index = 0, y index = 1, header = false, col sep=comma] {results/time_CelabA_C2.csv};
\addplot table[x index = 0, y index = 2, header = false, col sep=comma] {results/time_CelabA_C2.csv};
\addplot table[x index = 0, y index = 3, header = false, col sep=comma] {results/time_CelabA_C2.csv};
\addplot table[x index = 0, y index = 4, header = false, col sep=comma] {results/time_CelabA_C2.csv};

\pgfplotsextra{\yafdrawaxis{5}{50}{0}{2.5}}
\end{axis}
\end{tikzpicture} &

\begin{tikzpicture}
\begin{axis}[xlabel={$k$}, ylabel= {time (s)},
    width = \figwidth,
    height = \figheight,
    title = {Celeb (A)},
    ymin = 0,
    ymax = 4,
    scaled y ticks = false,
    cycle list name=yaf,
    yticklabel style={/pgf/number format/fixed},
    no markers,
    xtick = {10, 30, 50},
    ]
\addplot table[x index = 0, y index = 1, header = false, col sep=comma] {results/time_CelabA_C2b.csv};
\addplot table[x index = 0, y index = 2, header = false, col sep=comma] {results/time_CelabA_C2b.csv};
\addplot table[x index = 0, y index = 3, header = false, col sep=comma] {results/time_CelabA_C2b.csv};
\addplot table[x index = 0, y index = 4, header = false, col sep=comma] {results/time_CelabA_C2b.csv};

\pgfplotsextra{\yafdrawaxis{5}{50}{0}{4}}
\end{axis}
\end{tikzpicture} \\

\begin{tikzpicture}
\begin{axis}[xlabel={$k$}, ylabel= {time (s)},
    width = \figwidth,
    height = \figheight,
    title = {Celeb (S+A)},
    ymin = 0,
    ymax = 3,
    scaled y ticks = false,
    cycle list name=yaf,
    yticklabel style={/pgf/number format/fixed},
    no markers,
    xtick = {10, 30, 50},
    ]
\addplot table[x index = 0, y index = 1, header = false, col sep=comma] {results/time_CelabA_C4.csv};
\addplot table[x index = 0, y index = 2, header = false, col sep=comma] {results/time_CelabA_C4.csv};
\addplot table[x index = 0, y index = 3, header = false, col sep=comma] {results/time_CelabA_C4.csv};
\addplot table[x index = 0, y index = 4, header = false, col sep=comma] {results/time_CelabA_C4.csv};

\pgfplotsextra{\yafdrawaxis{5}{50}{0}{3}}
\end{axis}
\end{tikzpicture} &

\begin{tikzpicture}
\begin{axis}[xlabel={$k$}, ylabel= {time (s)},
    width = \figwidth,
    height = \figheight,
    title = {Census (S)},
    ymin = 0,
    ymax = 3,
    scaled y ticks = false,
    cycle list name=yaf,
    yticklabel style={/pgf/number format/fixed},
    no markers,
    xtick = {10, 30, 50},
    ]
\addplot table[x index = 0, y index = 1, header = false, col sep=comma] {results/time_Census_C2.csv};
\addplot table[x index = 0, y index = 2, header = false, col sep=comma] {results/time_Census_C2.csv};
\addplot table[x index = 0, y index = 3, header = false, col sep=comma] {results/time_Census_C2.csv};
\addplot table[x index = 0, y index = 4, header = false, col sep=comma] {results/time_Census_C2.csv};

\pgfplotsextra{\yafdrawaxis{5}{50}{0}{3}}
\end{axis}
\end{tikzpicture} &

\begin{tikzpicture}
\begin{axis}[xlabel={$k$}, ylabel= {time (s)},
    width = \figwidth,
    height = \figheight,
    title = {Census (A)},
    ymin = 0,
    ymax = 18,
    scaled y ticks = false,
    cycle list name=yaf,
    yticklabel style={/pgf/number format/fixed},
    no markers,
    xtick = {10, 30, 50},
    ]
\addplot table[x index = 0, y index = 1, header = false, col sep=comma] {results/time_Census_C7.csv};
\addplot table[x index = 0, y index = 2, header = false, col sep=comma] {results/time_Census_C7.csv};
\addplot table[x index = 0, y index = 3, header = false, col sep=comma] {results/time_Census_C7.csv};
\addplot table[x index = 0, y index = 4, header = false, col sep=comma] {results/time_Census_C7.csv};

\pgfplotsextra{\yafdrawaxis{10}{50}{0}{18}}
\end{axis}
\end{tikzpicture} &

\begin{tikzpicture}
\begin{axis}[xlabel={$k$}, ylabel= {time (s)},
    width = \figwidth,
    height = \figheight,
    title = {Census (S+A)},
    xmin = 15,
    ymin = 0,
    ymax = 22,
    scaled y ticks = false,
    cycle list name=yaf,
    yticklabel style={/pgf/number format/fixed},
    no markers,
    xtick = {15, 32, 50},
    ]
\addplot table[x index = 0, y index = 1, header = false, col sep=comma] {results/time_Census_C14.csv};
\addplot table[x index = 0, y index = 2, header = false, col sep=comma] {results/time_Census_C14.csv};
\addplot table[x index = 0, y index = 3, header = false, col sep=comma] {results/time_Census_C14.csv};
\addplot table[x index = 0, y index = 4, header = false, col sep=comma] {results/time_Census_C14.csv};

\pgfplotsextra{\yafdrawaxis{15}{50}{0}{22}}
\end{axis}
\end{tikzpicture} &

\begin{tikzpicture}
\begin{axis}[xlabel={$k$}, ylabel= {time (s)},
    width = \figwidth,
    height = \figheight,
    title = {Twitter (S)},
    ymin = 0,
    ymax = 80,
    scaled y ticks = false,
    cycle list name=yaf,
    yticklabel style={/pgf/number format/fixed},
    no markers,
    xtick = {10, 30, 50},
    ]
\addplot table[x index = 0, y index = 1, header = false, col sep=comma] {results/time_Twitter_C3.csv};
\addplot table[x index = 0, y index = 2, header = false, col sep=comma] {results/time_Twitter_C3.csv};
\addplot table[x index = 0, y index = 3, header = false, col sep=comma] {results/time_Twitter_C3.csv};
\addplot table[x index = 0, y index = 4, header = false, col sep=comma] {results/time_Twitter_C3.csv};

\pgfplotsextra{\yafdrawaxis{5}{50}{0}{80}}
\end{axis}
\end{tikzpicture}

\end{tabular}

\caption{Running time of different algorithms versus the solution size $k$ on a sample of $n=1000$.\label{fig:timesmall}}

\end{figure*}

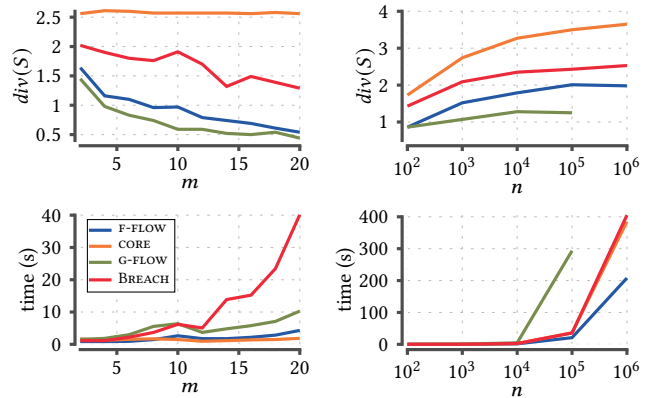
\begin{figure}

\setlength{\figwidth}{4.5cm}
\setlength{\figheight}{3.3cm}

\begin{tabular}{rr}

\begin{tikzpicture}[baseline=0]
\begin{axis}[xlabel={$m$}, ylabel= {$\diver{S}$},
    width = \figwidth,
    height = \figheight,
    ymin = 0.4,
    ymax = 2.6,
    scaled y ticks = false,
    cycle list name=yaf,
    yticklabel style={/pgf/number format/fixed},
    no markers,
    ]

\addplot table[x index = 0, y index = 1, header = false, col sep=comma] {results/div_vary_c.csv};
\addplot table[x index = 0, y index = 2, header = false, col sep=comma] {results/div_vary_c.csv};
\addplot table[x index = 0, y index = 3, header = false, col sep=comma] {results/div_vary_c.csv};
\addplot table[x index = 0, y index = 4, header = false, col sep=comma] {results/div_vary_c.csv};

\pgfplotsextra{\yafdrawaxis{2}{20}{0.4}{2.6}}
\end{axis}
\end{tikzpicture} &

\begin{tikzpicture}[baseline=0]
\begin{axis}[xlabel={$n$}, ylabel= {$\diver{S}$},
    width = \figwidth,
    height = \figheight,
    ymin = 0.5,
    ymax = 4,
    scaled y ticks = false,
    cycle list name=yaf,
    yticklabel style={/pgf/number format/fixed},
    no markers,
    xtick = {2, 3, 4, 5, 6},
    xticklabels = {$10^2$, $10^3$, $10^4$, $10^5$, $10^6$},
    ]

\addplot table[x index = 0, y index = 1, header = false, col sep=comma] {results/div_vary_n.csv};
\addplot table[x index = 0, y index = 2, header = false, col sep=comma] {results/div_vary_n.csv};
\addplot table[x index = 0, y index = 3, header = false, col sep=comma] {results/div_vary_n.csv};
\addplot table[x index = 0, y index = 4, header = false, col sep=comma] {results/div_vary_n.csv};

\pgfplotsextra{\yafdrawaxis{2}{6}{0.5}{4}}
\end{axis}
\end{tikzpicture} \\
\begin{tikzpicture}[baseline=0]
\begin{axis}[xlabel={$m$}, ylabel= {time (s)},
    width = \figwidth,
    height = \figheight,
    ymin = 0,
    ymax = 40,
    scaled y ticks = false,
    cycle list name=yaf,
    yticklabel style={/pgf/number format/fixed},
    no markers,
    legend pos = {north west},
    legend columns = 1,
    legend style = {row sep=-2pt, inner sep=0pt},
    legend entries = {\labflow, \labilp, \labgrd, \labmmd}
    ]

\addplot table[x index = 0, y index = 1, header = false, col sep=comma] {results/timing_vary_c.csv};
\addplot table[x index = 0, y index = 2, header = false, col sep=comma] {results/timing_vary_c.csv};
\addplot table[x index = 0, y index = 3, header = false, col sep=comma] {results/timing_vary_c.csv};
\addplot table[x index = 0, y index = 4, header = false, col sep=comma] {results/timing_vary_c.csv};

\pgfplotsextra{\yafdrawaxis{2}{20}{0}{40}}
\end{axis}
\end{tikzpicture} &

\begin{tikzpicture}[baseline=0]
\begin{axis}[xlabel={$n$}, ylabel= {time (s)},
    width = \figwidth,
    height = \figheight,
    scaled y ticks = false,
    cycle list name=yaf,
    yticklabel style={/pgf/number format/fixed},
    no markers,
    xtick = {2, 3, 4, 5, 6},
    xticklabels = {$10^2$, $10^3$, $10^4$, $10^5$, $10^6$},
    ]

\addplot table[x index = 0, y index = 1, header = false, col sep=comma] {results/timing_vary_n.csv};
\addplot table[x index = 0, y index = 2, header = false, col sep=comma] {results/timing_vary_n.csv};
\addplot table[x index = 0, y index = 3, header = false, col sep=comma] {results/timing_vary_n.csv};
\addplot table[x index = 0, y index = 4, header = false, col sep=comma] {results/timing_vary_n.csv};

\pgfplotsextra{\yafdrawaxis{2}{6}{0}{400}}
\end{axis}
\end{tikzpicture}

\end{tabular}

\caption{Diversity scores and the running times of the algorithms as a function of $m$ and $n$ on synthetic graphs.\label{fig:syn}}

\end{figure}
\section{Conclusions}
We proposed a novel randomized approximation algorithm for \FMMD called \algmmd{}.
Our algorithm uses random CKR partitions to find well-separated clusters and a novel maximum flow procedure to extract feasible solutions.
We theoretically prove our algorithm outperforms existing methods in a regime of few representatives per color, and also give a guarantee for the general problem.
In practice, our algorithm outperforms all other polynomial time approximation algorithms, while achieving slightly worse diversity score than the best exponential time algorithm. 
A direction for future work is to further narrow down the gap between the $\frac{1}{2}+\epsilon$ inapproxabimility lower bound and the approximation ratios from Table~\ref{table:summaryresults}.

%


\begin{acks}
This research is supported by the Academy of Finland project MALSOME (343045) and by the Helsinki Institute for Information Technology (HIIT).
\end{acks}

\bibliographystyle{ACM-Reference-Format}
\bibliography{kdd_2025}



\end{document}